\documentclass[journal]{IEEEtran}
\IEEEoverridecommandlockouts
\usepackage{amsmath,graphicx}
\usepackage{amsfonts}
\usepackage{amsbsy}
\usepackage{amsmath,amssymb}
\usepackage{amsthm}
\usepackage{times}
\usepackage{graphicx}
\usepackage[caption=false,font=normalsize,labelfont=sf,textfont=sf]{subfig}

\usepackage{enumerate}
\usepackage[usenames]{color}
\usepackage{epstopdf}
\usepackage{cite}
\usepackage{epsfig}
\usepackage{psfrag}
\usepackage{xcolor}
\usepackage{bm}
\usepackage{epstopdf}
\usepackage{cite}
\usepackage{color}
\usepackage{cuted}
\usepackage{xcolor}
\usepackage{verbatim}
\usepackage{algorithm}
\usepackage{algorithmic}
\usepackage{booktabs}
\usepackage{colortbl}
\usepackage{subfig}
\usepackage{enumitem}
\usepackage[colorlinks =True,
            linkcolor  =blue,
            anchorcolor=green,
            citecolor  =blue]{hyperref}

\newtheorem{theorem}{Theorem}[section]
\newtheorem{definition}{Definition}

\newtheorem{lemma}[theorem]{Lemma}
\newtheorem{remark}[theorem]{Remark}
\def\BibTeX{{\rm B\kern-.05em{\sc i\kern-.025em b}\kern-.08em
		T\kern-.1667em\lower.7ex\hbox{E}\kern-.125emX}}
\begin{document}
	
	\title{Neural Collapse based Deep Supervised Federated Learning for Signal Detection in OFDM Systems
	}
	\author{Kaidi Xu, Shenglong Zhou, and Geoffrey Ye Li, {\it IEEE Fellow}
		
    \thanks{Kaidi Xu and Geoffrey Ye Li are with the ITP Lab, Department of EEE, Imperial College London, UK. Shenglong Zhou is with the School of Mathematics and Statistics, Beijing Jiaotong University, China. Emails:  k.xu21@imperial.ac.uk, shlzhou@bjtu.edu.cn, geoffrey.li@imperial.ac.uk}
			\thanks{
				*Corresponding author: Shenglong Zhou. This work was supported by the Fundamental Research Funds for the Central Universities.
			}}

		\maketitle
		
		\begin{abstract}
        Future wireless networks are expected to be AI-empowered, making their performance highly dependent on the quality of training datasets. However, physical-layer entities often observe only partial wireless environments characterized by different power delay profiles. Federated learning is capable of addressing this limited observability, but often struggles with data heterogeneity.  To tackle this challenge, we propose a neural collapse (NC) inspired deep supervised federated learning (NCDSFL) algorithm.  
Specifically, we first define an NC solution for the multi-binary classification problem and establish its optimality by revealing its connection to the global optimum. We then incorporate the deep supervision technique into deep neural networks and fix the weights at both the output layer and an auxiliary hidden layer using the derived NC solutions.  When this strategy is applied in a federated learning setting, it encourages different clients to produce similar hidden features, thereby enabling the proposed algorithm to effectively address data heterogeneity and achieve fast convergence. Simulations for signal detection in OFDM systems confirm the NC phenomenon and demonstrate that NCDSFL outperforms several baselines in terms of convergence speed and accuracy.
        

		\end{abstract}
		
		\begin{IEEEkeywords}
			signal detection, federated learning, neural receiver, feature alignment, neural collapse, deep supervision
		\end{IEEEkeywords}
		
		\section{Introduction}
        \IEEEPARstart{D}{eep}  Learning (DL) has been recognized as a powerful tool to enhance wireless communication systems in various applications, such as resource allocation, networking, and mobility management, and signal detection  \cite{sun2019application, pham2021intelligent, ye2017power, yi2020deep, ye2020deep,ye2021deep, honkala2021deeprx, sheng2024beam}.        
        Unlike conventional model-based methods, DL-based methods can learn implicit information from the training data and jointly optimize different communication modules wherever gradient backpropagation is possible.
        In \cite{ye2017power}, a deep neural network (DNN) is used as the neural receiver for joint channel estimation and signal detection in an Orthogonal Frequency Division Multiplexing (OFDM) system.        
        In \cite{yi2020deep}, a channel estimation DNN replaces the interpolation procedure to exploit implicit subcarrier correlations.
        End-to-end designs are later introduced in \cite{ye2020deep,ye2021deep}, where DNNs serve as both transmitter and receiver.
        However, the performance of DL-based methods heavily depends on the quality of the training dataset. As illustrated in Fig. \ref{sys_diagram}, user devices (UDs) can only observe partial wireless environments, characterized by unique power delay profiles (PDPs) determined by the positions of UDs, the base station, and the scatterers. 
        Collecting data from all users is expensive and raises privacy concerns, limiting the practicality of DL-based methods.

     \begin{figure}[!t]
			\centering
			\includegraphics[width=0.45\textwidth]{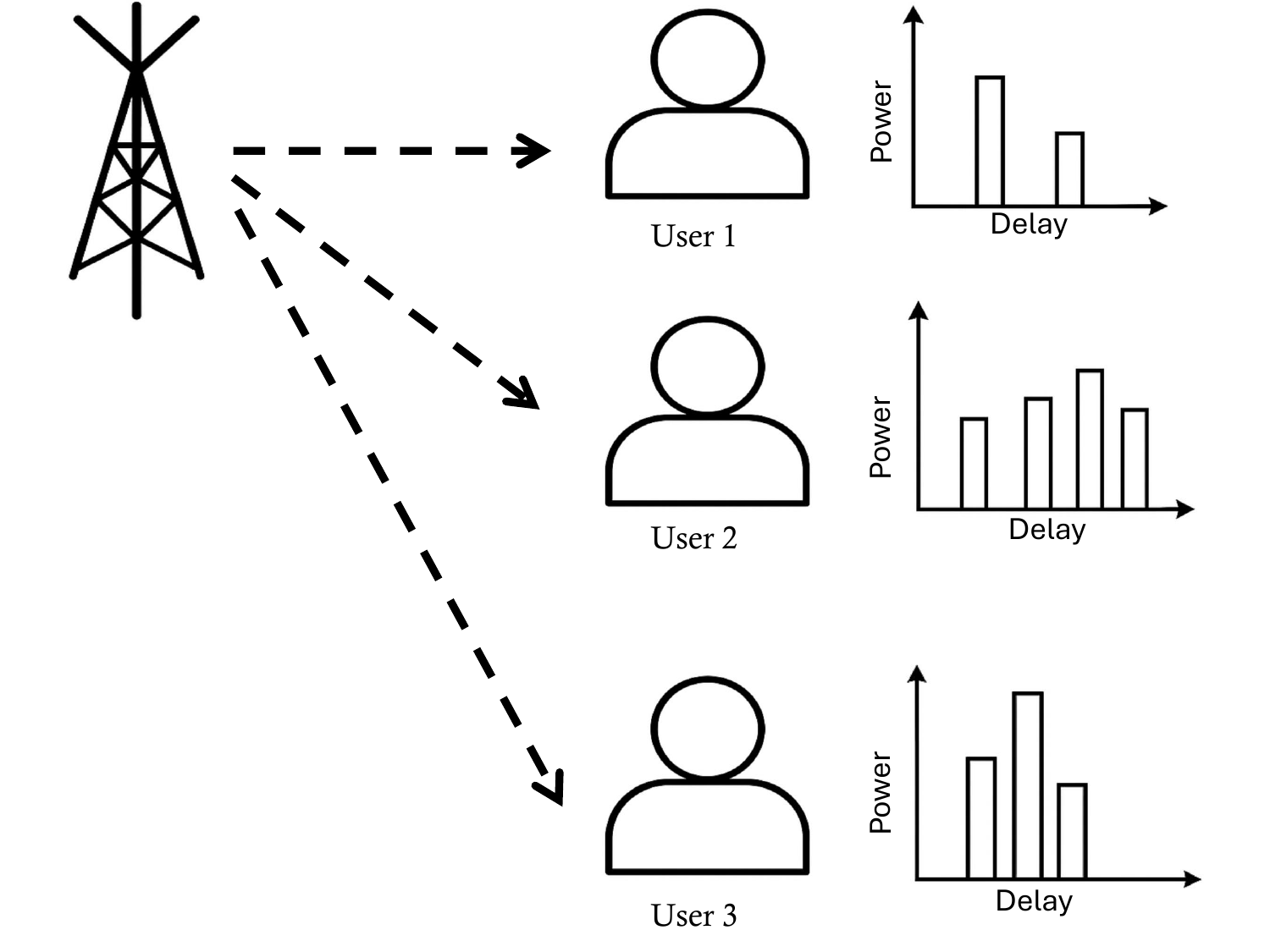}
			\caption{System diagram}
			\label{sys_diagram}
		\end{figure}
		
        To address this issue, federated learning (FL) has been extensively applied to wireless communications \cite{mcmahan2017communication,qin2021federated}.
        In a typical FL system, there is a central server aggregating and broadcasting model parameters from and to the clients without raw data exchange.
        The aggregated model integrates knowledge from the clients with partial observations of the wireless environment, enabling good generalizability across the global environment.
        For instance, in \cite{xu2024federated, xu2024rescale}, FL aggregates knowledge from client agent models to train a global agent model in Vehicle-to-Everything (V2X) systems.
        An online personalized FL algorithm has also been designed to collaboratively train a neural receiver across multiple cells \cite{wang2023new}.
        However, clients' local datasets are typically heterogeneous due to their limited observability, which can slow down the convergence of FL algorithms and increase communication overhead \cite{li2020federated}.
        
        To mitigate client drift caused by data discrepancy, feature alignment has been explored in various studies \cite{gupta2025fedalign, li2020federated, li2021model, wang2020federated}.
        FedProx \cite{li2020federated} introduces a regularization term that penalizes the divergence between the local client model and the global model during local updates, implicitly aligning their feature representations.
        MOON \cite{li2021model} employes contrastive learning to promote similarity between the feature representations of the local and global models.
        In \cite{wang2020federated}, hidden elements are permuted within each layer to align elements with similar feature extraction patterns for more effective aggregation.
        While these methods demonstrate the effectiveness of feature alignment in FL, they often incur additional computational overhead or treat all model parameters uniformly, overlooking the influence of data distribution.
        To deploy FL effectively in practical wireless systems, it is crucial to reduce both communication and computation overhead.

		To achieve this goal, we propose a neural collapse (NC) inspired deep supervised FL (NCDSFL) framework. The main contributions are threefold.
    \begin{itemize}[leftmargin=13pt]           
\item[1)]  We provide an explicit NC solution for a multi-binary classification problem and rigorously establish its optimality by revealing the relationship between an NC  and a global solution to the problem.  To the best of our knowledge, this has not been addressed in the existing literature.  

\item[2)] We then introduce the technique of deep supervision (DS) in a DNN by adding one auxiliary layer connected to the third-to-last layer. This approach can be interpreted as incorporating a regularization term into the objective function of the target optimization model. During training, we fix the weights at both the output layer and the auxiliary layer using two derived NC solutions. As a result, the number of trainable parameters is reduced. 

\item[3)] We integrate DS-based DNNs, with a portion of their weights fixed using NC solutions, into an FL paradigm. This design encourages different clients to produce similar hidden features at both the penultimate and auxiliary layers. A hidden feature refers to the output of a specific layer of neurons; see \eqref{def-hidden-feature} for a detailed definition.
Consequently, the proposed NCDSFL algorithm can effectively address data heterogeneity and achieve fast convergence, as demonstrated by simulation results on signal detection in OFDM systems compared to several baseline methods. Furthermore, we empirically validate the emergence of the NC phenomenon in multi-binary classification tasks. 
\end{itemize}

        \subsection{Related Works}
        NC \cite{papyan2020prevalence} is a terminal state of neural networks at the end of the training phase.
        It provides an elegant mathematical analysis on the features and parameters of the last layer by peeling them off the neural networks and treating them as free variables. 
        A specific NC form depends on the formulation of the training problem, but usually includes several basic properties. More details can be found in  \cite{kothapalli2022neural}.

  A growing body of research has explored the idea of NC under various configurations. For example, in \cite{papyan2020prevalence}, NC has been first defined in the context of canonical classification with cross-entropy loss. 
  In \cite{zhu2021geometric}, the loss landscape with weight decay has been further analyzed. 
  It demonstrated that only the global minimizers correspond to simplex Equiangular Tight Frames (ETFs) while all other critical points are strict saddle points. This insight explains the effectiveness of cross-entropy loss for classification tasks.
In \cite{nguyen2022memorization}, the effect of label corruption was examined, and it has been shown that NC can still emerge, albeit with modified geometric structures due to memorization.
The authors in \cite{yang2022inducing} analyzed the gradients of the cross-entropy loss and proposed replacing the final learnable classifier with fixed ETF classifiers and a custom loss function, which can achieve a theoretical convergence rate no worse than that of the original configuration. In \cite{li2023neural}, NC in multi-label classification was investigated using a pick-all-label loss, where each sample can belong to multiple classes. Beyond multi-label classification, NC has recently been adopted for multivariate regression \cite{andriopoulos2024prevalence}.

Deep learning-based signal detection can be modeled as a multi-binary classification problem, which is equivalent to multi-label classification.
However, the method in \cite{li2023neural} required a threshold to determine how many labels to assign in the prediction phase, making it unsuitable for our signal detection setting.
To this end, we derive a new NC form for multi-binary classification and prove its global optimality.

DS was first introduced in \cite{lee2015deeply}, where additional supervision signals were applied to intermediate hidden layers of DNNs. This approach helps mitigate the vanishing gradient problem, encourages the learning of more discriminative features, and improves overall network performance \cite{li2022comprehensive}.
It has been applied to a variety of computer vision tasks, including image segmentation, object detection, and image super-resolution \cite{dou20163d, wang2021sne, kim2016deeply}.
The MOON algorithm \cite{li2021model} can also be considered a form of DS, since it introduces regularization terms for intermediate hidden representations, jointly optimized with the main loss function.  Since NC characterizes the optimal solution to the training problem, it is promising to train DNNs using fixed NC solutions as weights not only at the penultimate layer but also at the auxiliary layer. However, to the best of our knowledge, existing DS methods have not yet incorporated NC-based solutions.

\subsection{Organizations}
The paper is organized as follows.   Section \ref{section_mbclpm} introduces the multi-binary classification problem, constructs a layer-peeled model, defines the corresponding NC solution, and establishes its optimality.
Section \ref{ncdsfl_framework_section} presents the FL framework and develops the proposed NCDSFL algorithm, with an application to signal detection in OFDM transmission systems.
Simulation results are provided in Section \ref{section_simulation},  and conclusions are drawn in Section \ref{section_conclusion}.

\section{Multi-Binary Classification}\label{section_mbclpm}              
In this section, we introduce the multi-binary classification problem and propose the NC solution of the corresponding layer peeled model, followed by its application in DNNs.

\subsection{Problematic Description}       
Let ${s\in\mathcal{S}^I}$ denote a binary label sequence of $I$ binary classes, where $\mathcal{S}^I$ denotes the space of binary sequences with length $I$. The $i$-th value, denoted by $s_i$, of $s$ represents the $i$th class of the data sample labeled by $s$. Taken ${I=3}$ as an example, ${\mathcal{S}^I=\{000,001,010,011,100,101,110,111\}}$ and ${s=011}$ means that the sample labeled as $s$ belongs to the second or third classes due to ${s_2=s_3=1}$. In the sequel, for notational simplicity, we denote  \begin{equation}
    \mathcal{S}=\mathcal{S}^I=\left\{s^{(0)},s^{(1)},\ldots,s^{(2^I-1)}\right\},
\end{equation} as $\mathcal{S}^I$ has $2^I$ elements. The superscript $j$ in $s^{(j)}$ is determined by binary sequence $s$ itself. For example, $j=1$ if $s=001$ and $j=6$ if $s=110$.  
We consider the balanced distribution case, where there are $K$ samples for each label $s\in\mathcal{S}^I$\footnote{This assumption is reasonable in a wireless communication scenario, because the transmitted binary sequences are usually uniformly distributed binary sequences.}.
As a result, there are $K2^I$ data samples in total.
In addition, we denote the linear classifiers for the $i$th binary class as $\mathbf{w}_{i,1}\in\mathbb{R}^d$ and $\mathbf{w}_{i,0}\in\mathbb{R}^d$ and the hidden feature for the $k$th data sample labeled by sequence $s$ as $\mathbf{h}_s^{(k)}\in\mathbb{R}^d$.


Following the canonical classification paradigm, we use cross entropy with regularized weights and features as our loss function. 
 The resulting loss function is given by
    \begin{equation}\label{loss}
    \begin{aligned}
        L(\mathbf{W},\mathbf{H})&=\lambda \|\mathbf{W}\|^2+\lambda \|\mathbf{H}\|^2\\
        &+\frac{1}{KI2^I}\sum_{k=1}^K\sum_{i=1}^I \sum_{s\in\mathcal{S}}\ln\Big(1+\exp\Big(\phi_{kis}(\mathbf{W},\mathbf{H})\Big)\Big),
    \end{aligned}
    \end{equation} 
    where $\lambda>0$ and $\phi_{kis}(\mathbf{W},\mathbf{H})$ is defined by
\begin{equation}\label{loss-phi}
    \begin{aligned}
        \phi_{kis}(\mathbf{W},\mathbf{H})= \left\langle\mathbf{w}_{i,1-s_i}-\mathbf{w}_{i,s_i},\mathbf{h}_s^{(k)}\right\rangle.
    \end{aligned}
    \end{equation}            
Here, $\langle\cdot,\cdot\rangle$ denotes the vector inner product, $1-{s}_i, s_i\in\{0,1\}$, and
$\mathbf{W}$ and $\mathbf{H}$ denote the collection of all classifiers and all features, namely,
\begin{equation}\label{def-W-H}
\begin{aligned}
\mathbf{W}&=\Big[\mathbf{w}_{1,0}, \mathbf{w}_{2,0},\ldots,\mathbf{w}_{I,0}, \mathbf{w}_{1,1}, \mathbf{w}_{2,1},\ldots\mathbf{w}_{I,1}\Big],\\
\mathbf{H}&=\Big[ 
\mathbf{h}_{s^{(0)}}^{(1)}, \ldots,  \mathbf{h}_{s^{(2^I-1)}}^{(1)},  \ldots, \mathbf{h}_{s^{(0)}}^{(K)},\ldots, \mathbf{h}_{s^{(2^I-1)}}^{(K)}\Big].            
\end{aligned}
\end{equation} 
\subsection{NC Solutions} 
It is known that a layer-peeled model  \cite{fang2021exploring,yang2022inducing} ignores the impact of the backbone network and treats the features of the penultimate hidden layer $\mathbf{H}$ as free variables, which is true when the DNN is over-parameterized due to the universal approximation property of DNNs.
The layer peeled model of the multi-binary classification problem can be expressed as the following optimization problem,   
\begin{equation}\label{multi-binary_classification}
    \min_{\mathbf{W},\mathbf{H}}\quad L(\mathbf{W},\mathbf{H}).
\end{equation}
To analyze the above problem, let ${\rm vec}(\mathbf{H})$ denote the column-wise vectorization of $\mathbf{H}$, namely, 
\begin{equation}
    {\rm vec}(\mathbf{H})=\Bigg[\mathbf{h}_{s^{(0)}}^{(1)T},\ldots, \mathbf{h}_{s^{(2^I-1)}}^{(1)T}, \ldots, \mathbf{h}_{s^{(0)}}^{(K)T},\ldots, \mathbf{h}_{s^{(2^I-1)}}^{(K)T}\Bigg]^T,          
\end{equation}
where $\mathbf{h}_{s^{(i)}}^{(k)T}=(\mathbf{h}_{s^{(i)}}^{(k)})^T$.         
Let $\mathbf{I}_d\in\mathbb{R}^{d\times d}$ be an identity matrix, and $\mathbf{A}\in\mathbb{R}^{dI\times dK2^I}$ and $\mathbf{C}^i\in\mathbb{R}^{d\times d2^i}$ be defined by,
\begin{equation}\label{A_def}
    \mathbf{A} = \underbrace{\left[\mathbf{B}^I,\ldots,\mathbf{B}^I\right]}_{\text{K times}}, \qquad \mathbf{C}^i=\underbrace{[\mathbf{I}_d,\ldots,\mathbf{I}_d]}_{2^{i}~ \text{times}},
\end{equation}
where $\mathbf{B}^I$ is recurrently generated as follows,
\begin{equation}\label{B_def}
            \mathbf{B}^1=\begin{bmatrix}
                -\mathbf{I}_d & \mathbf{I}_d 
            \end{bmatrix},
            \quad
            \mathbf{B}^{i+1}=\begin{bmatrix}
                \mathbf{B}^i &\mathbf{B}^i\\
                -\mathbf{C}^i &\mathbf{C}^i
            \end{bmatrix},
        \end{equation}  
        for ${i=1,2,\cdots,I-1}$. Based on these notations, an NC solution to problem \eqref{multi-binary_classification} is introduced as follows. 
        \begin{definition}\label{nc_def} A point $(\mathbf{W},\mathbf{H})$ is called an NC solution to problem (\ref{multi-binary_classification}) for the multi-binary classification if it satisfies the following conditions:
            \begin{itemize}[leftmargin=12pt]
            \item {\it (NC1: hidden feature collapse).}
            All features  converge to the mean, i.e., $\mathbf{h}_s^{(k)} = \frac{1}{K}\sum_{j=1}^K \mathbf{h}_s^{(j)}, \forall s, k$.
            \item {\it (NC2, classifier collapse).}
            Classifiers of all parallel binary classification problems form an orthogonal set, i.e., ${\langle\mathbf{w}_{i,1},\mathbf{w}_{j,1}\rangle=0},\forall i\neq j$, and ${\mathbf{w}_{i,1}+\mathbf{w}_{i,0}=\mathbf{0}}, \forall i$.
            In addition, ${\|\mathbf{w}_{i,1}\|=\|\mathbf{w}_{j,1}\|},\forall i,j$.
            \item {\it (NC3: duality).} 
              The classifier and the hidden features are linearly aligned by 
              \begin{equation}
                  {{\rm vec}(\mathbf{H})=c_0\left[\mathbf{A}^T,-\mathbf{A}^T\right]{\rm vec}(\mathbf{W})},
              \end{equation} for a scaling constant $c_0>0$.
        \end{itemize}
        \end{definition}        
        The following theorem reveals the relationship between the NC  solution and the global solution to problem \eqref{multi-binary_classification}.
        \begin{theorem}\label{NC_theorem}
            If ${0<\lambda <  {1}/({2I\sqrt{2K2^I}}})$, then any global minimizer of problem (\ref{multi-binary_classification}) is an NC solution.
        \end{theorem}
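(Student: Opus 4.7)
The plan is to lower-bound $L(\mathbf{W},\mathbf{H})$ by a one-parameter scalar function whose unique minimum is attained exactly at the NC configurations of Definition~\ref{nc_def}, and to argue that the stated bound on $\lambda$ is precisely what makes the minimizing parameter nontrivial. Three classical inequalities --- Jensen's applied to $\ln(1+e^x)$, Cauchy--Schwarz on a bilinear form in $\mathrm{vec}(\mathbf{W})$ and $\mathrm{vec}(\mathbf{H})$, and AM-GM on $\lambda\|\mathbf{W}\|^2+\lambda\|\mathbf{H}\|^2$ --- drive the reduction, and each of their equality conditions supplies one of NC1--NC3.

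Writing $\mathbf{v}_i:=\mathbf{w}_{i,1}-\mathbf{w}_{i,0}$, we have $\phi_{kis}=(1-2s_i)\langle\mathbf{v}_i,\mathbf{h}_s^{(k)}\rangle$. Averaging first over $k$ and invoking the convexity of $\ln(1+e^x)$ gives
\[
\frac{1}{K}\sum_{k=1}^{K}\ln\!\big(1+e^{\phi_{kis}}\big)\;\ge\;\ln\!\Big(1+e^{(1-2s_i)\langle\mathbf{v}_i,\bar{\mathbf{h}}_s\rangle}\Big),
\]
with equality iff $\phi_{kis}$ is independent of $k$; combined with the quadratic penalty on $\|\mathbf{H}\|$ (which selects the minimum-norm representative among features with prescribed projections onto the $\mathbf{v}_i$), this forces $\mathbf{h}_s^{(k)}=\bar{\mathbf{h}}_s$, i.e., NC1. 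A second Jensen step over $(i,s)$ compresses the remaining log-sum-exp to $\ln(1+e^{\bar\phi})$ where $\bar\phi$ is the mean logit. Using the matrices $\mathbf{A}$, $\mathbf{B}^I$, $\mathbf{C}^i$ from \eqref{A_def}--\eqref{B_def}, this mean logit rewrites as an inner product $\langle\mathrm{vec}(\mathbf{W}),\,\mathbf{M}\,\mathrm{vec}(\mathbf{H})\rangle$ for an explicit matrix $\mathbf{M}$ of signed identity blocks, and Cauchy--Schwarz yields $\bar\phi\ge -c\,\|\mathbf{W}\|\,\|\mathbf{H}\|$ for a constant $c$ depending only on $I$, $K$, $2^I$. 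Its equality case is precisely the linear alignment $\mathrm{vec}(\mathbf{H})=c_0[\mathbf{A}^T,-\mathbf{A}^T]\mathrm{vec}(\mathbf{W})$ of NC3, and combining this alignment with the third inequality $\lambda\|\mathbf{W}\|^2+\lambda\|\mathbf{H}\|^2\ge 2\lambda\|\mathbf{W}\|\,\|\mathbf{H}\|$ (tight iff $\|\mathbf{W}\|=\|\mathbf{H}\|$) forces the orthogonality and equal-norm conditions on the $\mathbf{w}_{i,1}$'s together with the antipodality $\mathbf{w}_{i,1}+\mathbf{w}_{i,0}=\mathbf{0}$, which is NC2.

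What remains is a one-variable problem in $t:=\|\mathbf{W}\|\,\|\mathbf{H}\|$ whose lower envelope has the form $2\lambda I\sqrt{2K2^I}\,t+\ln(1+e^{-2t})$ after normalization. Setting its derivative to zero gives $e^{2t}=1/(\lambda I\sqrt{2K2^I})-1$, which admits a positive root exactly when $\lambda<1/(2I\sqrt{2K2^I})$ --- matching the hypothesis; outside this range only the trivial solution $\mathbf{W}=\mathbf{H}=\mathbf{0}$ remains stationary, which is ruled out under the stated bound. The main obstacle I foresee is the second inequality: making the Cauchy--Schwarz bound on the bilinear form simultaneously tight in configurations that force \emph{both} NC2 and NC3. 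The sign pattern $(1-2s_i)$, as $s$ ranges over $\mathcal{S}$ and $i$ over $\{1,\dots,I\}$, encodes precisely the recursive block structure of $\mathbf{B}^I$ in \eqref{B_def}, so the equality analysis must be carried out block-by-block along that recursion and then reassembled; the conclusion $\langle\mathbf{w}_{i,1},\mathbf{w}_{j,1}\rangle=0$ for $i\neq j$ in particular only emerges after fully exploiting how the $(1-2s_i)$'s decorrelate different binary classes when summed over $\mathcal{S}$.
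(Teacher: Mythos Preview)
Your overall architecture is the same as the paper's: Jensen on $\ln(1+e^x)$ to collapse the log-sum-exp to a single logit average, a Cauchy--Schwarz/operator-norm chain on the bilinear form $P=\mathbf{w}^T\mathbf{A}\mathbf{h}$ to bound that average by $-\sqrt{K2^{I-1}}\,(\|\mathbf{W}\|^2+\|\mathbf{H}\|^2)$, and then a one-variable convex minimization whose nontrivial optimum exists precisely when $\lambda<1/(2I\sqrt{2K2^I})$. The equality analysis of each inequality is likewise what drives NC1--NC3 in the paper.

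The one genuine gap is your derivation of NC2. You assert that the Cauchy--Schwarz equality (giving the alignment $\mathrm{vec}(\mathbf{H})=c_0[\mathbf{A}^T,-\mathbf{A}^T]\mathrm{vec}(\mathbf{W})$) together with the AM--GM equality ($\|\mathbf{W}\|=\|\mathbf{H}\|$) ``forces the orthogonality and equal-norm conditions on the $\mathbf{w}_{i,1}$'s.'' It does not. Those two equalities give you antipodality $\mathbf{w}_{i,0}+\mathbf{w}_{i,1}=\mathbf{0}$ and the duality relation, but they leave the Gram matrix of $\{\mathbf{w}_{i,1}\}$ completely free: any choice of $\{\mathbf{w}_{i,1}\}$, orthogonal or not, satisfies the alignment and the norm balance once $c_0$ is set appropriately (since $\mathbf{A}\mathbf{A}^T=K2^I\mathbf{I}$). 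What actually pins down the orthogonality is the \emph{Jensen} equality condition, namely that $\phi_{kis}$ is the same constant $c_1$ for every $k,i,s$. The paper isolates this as a separate lemma: once you know $\mathbf{h}_s^{(k)}=\tfrac{1}{cK2^I}\sum_j(2s_j-1)(\mathbf{w}_{j,1}-\mathbf{w}_{j,0})$ from the alignment, plugging into $\phi_{kis}=c_1$ and subtracting the equations for two labels $s,t$ that differ in a single bit $k$ kills all terms except $\langle\mathbf{w}_{i,1}-\mathbf{w}_{i,0},\mathbf{w}_{k,1}-\mathbf{w}_{k,0}\rangle$, which must then vanish for $i\neq k$; equal norms then follow by reading off the diagonal. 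Your closing remark about ``how the $(1-2s_i)$'s decorrelate different binary classes when summed over $\mathcal{S}$'' is pointing at the right phenomenon, but it belongs to the Jensen equality step, not to the Cauchy--Schwarz block analysis --- and you need it to close the argument.
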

        \begin{proof}
            The proof of Theorem \ref{NC_theorem} is given in Appendix \ref{proof}.
        \end{proof}
        \begin{remark} The proposed NC for multi-binary classification problem can also be extended to a multi-label classification scenario, where each data sample has an arbitrary number of class labels from $I$ classes.
        This multi-label classification problem is equivalent to the multi-binary classification problem with $I$ parallel binary classification sub-tasks, where each binary classification represents whether the sample is classified into the corresponding class or not.
        \end{remark} 
        
  \subsection{Deep supervised DNNs with NC Weights}

    Now we introduce $O$-layer neural networks with $(O-1)$ hidden layers. Specifically, let $d_o$ be the number of hidden units of the $o$-th hidden layer for ${i\in[O-1]}$, where $[O]=\{1,2,\cdots,O\}$. Let $d_0$ and $d_O$ represent the number of input and output units. Denote ${\mathcal{W}=\{\textbf{W}_1,\textbf{W}_2,\cdots,\textbf{W}_{O}\}}$ with $\textbf{W}_o\in \mathbb{R}^{d_{o}\times d_{o-1}}$ being the weight matrix of the $o$th layer. 
    For simplicity, we integrate both the weights and the bias into $\mathbf{W}$.    
    Let ${\mathcal{D}=\{(\mathbf{x}_n,\mathbf{y}_n): n\in[N]\}}$ be a given dataset, where $\mathbf{x}_n\in\mathbb{R}^{d_0}$ and $\mathbf{y}_n\in\mathbb{R}^{d_{O}}$ is the $n$th sample's feature and label, and $N$ is the number of samples. 
    The optimization model of DNNs can be built as follows,
\begin{align}
\label{OP31} 
&\min\limits_{\mathcal{W}} ~ L( \textbf{Y}, \widetilde{\textbf{Y}})  \\
&{~\rm{s.t.}}~~~ \widetilde{\textbf{Y}}={\rm softmax}(\textbf{W}_{O}\sigma(\textbf{W}_{O-1}\cdots\sigma(\textbf{W}_2\sigma(\textbf{W}_1\textbf{X})))),\nonumber
\end{align}
where $L$ is a loss function, $\sigma(\cdot)$ is an activation function, such as the rectified linear unit (ReLU) used in the sequel, $\textbf{X}=[\textbf{x}_1,\cdots, \textbf{x}_N]$, and $\textbf{Y}=[\textbf{y}_1,\cdots, \textbf{y}_N]$.   For simplicity, let
\begin{align}
\label{def-hidden-feature} \textbf{H}_{0}= \textbf{X},~~\textbf{H}_{o}=\sigma(\textbf{W}_o\textbf{H}_{o-1}),~ o=1,2,\ldots,O-1.\end{align}
Then $\textbf{H}_{o}$ is the output of the neurons at the $o$th layer and is referred to as the hidden features at this layer.
 
               \begin{figure}[!t]
			\centering
			\includegraphics[width=0.49\textwidth]{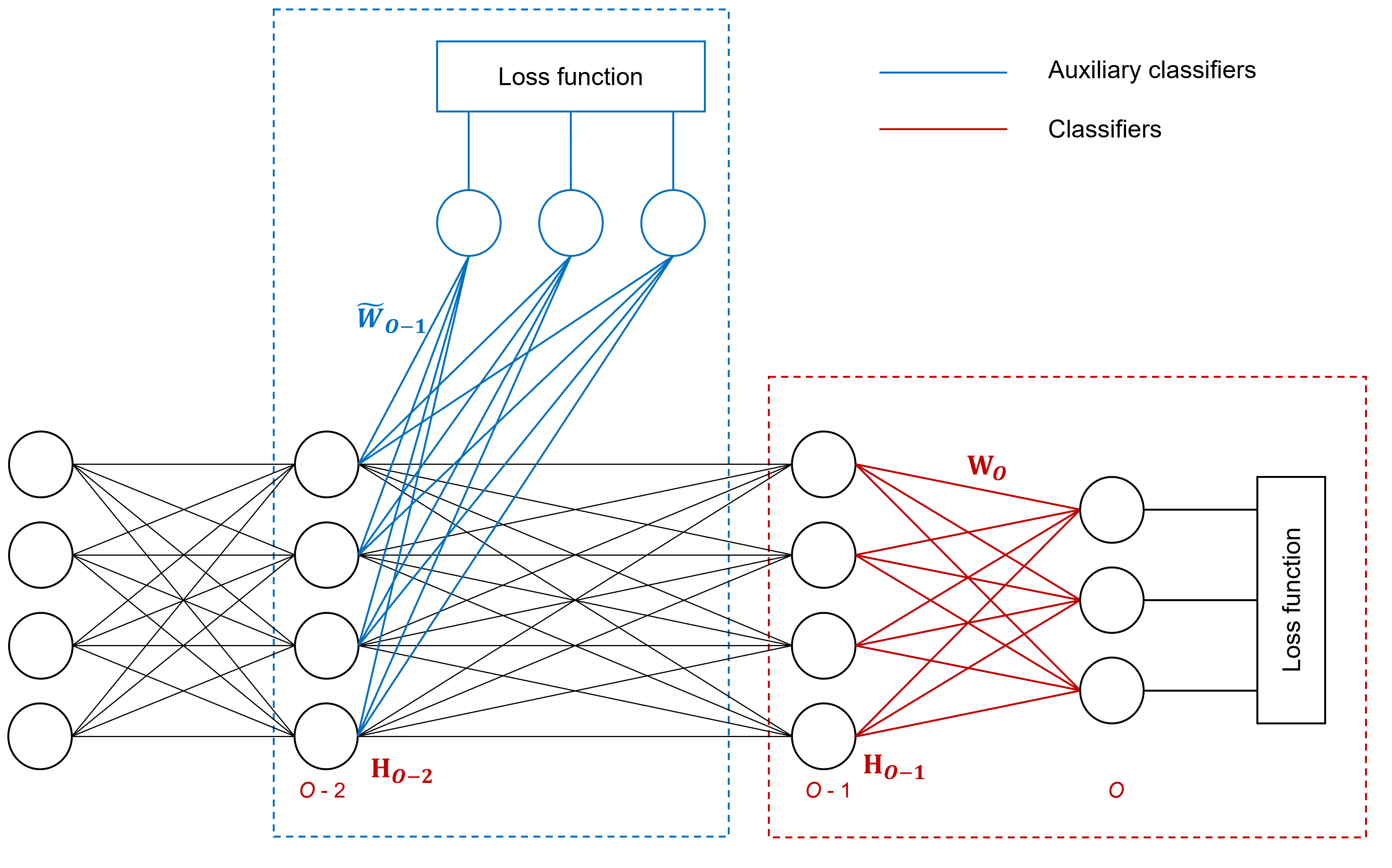}
			\caption{A deep supervised DNN}
			\label{system}
		\end{figure}

        The motivation for considering the NC solution stems from observations in many applications, such as the signal detection task illustrated in Figs. \ref{nc2} and \ref{nc3}, where the penultimate hidden layer of a DNN exhibits classifier collapse (i.e., NC2), as well as classifier and feature linear alignment (i.e., NC3), and the NC-related studies introduced in the introduction section.
        To leverage this structural behavior, we embed NC solutions as the weights at the penultimate hidden layer of a DNN. 
        
\subsubsection{Direct embedding}
 The first embedding is employed directly on the original DNN at the $O$th layer, i.e., the output layer, see the red rectangle in  Fig. \ref{system}, where $\textbf{W}_{O}$ in model \eqref{OP31} will be fixed by using an NC weight defined in Definition \ref{nc_def}.
 Specifically, we generate a set of orthogonal vectors $\{\mathbf{w}_{i,0}\}$ and let ${\mathbf{w}_{i,1}=-\mathbf{w}_{i,0}}$ for all ${i\in[I]}$ to form an NC weight $\textbf{W}_{NC}$. 
 The weight is then fixed during the whole training phase, as fixed NC classifiers do not harm classification performance \cite{zhu2021geometric}.
 To do so, the constraint in model \eqref{OP31}  becomes
\begin{align}
\label{OP31-con} 
\widetilde{\textbf{Y}}={\rm softmax}(\textbf{W}_{NC}\sigma(\textbf{W}_{O-1}\cdots\sigma(\textbf{W}_2\sigma(\textbf{W}_1\textbf{X})))).
\end{align}     
\subsubsection{Auxiliary embedding under the DS framework}  The second embedding is applied for an auxiliary layer. We introduce one auxiliary layer with weight matrix $\widetilde{\textbf{W}}_{O-1}$ collecting the $(O-2)$th neuron in the original network. This idea is termed as deep supervision (DS) in \cite{li2022comprehensive}.  
As shown in the blue rectangle in  Fig. \ref{system}, the $(O-2)$th neuron output in the original network is connected to the auxiliary loss function by the weight matrix $\widetilde{\textbf{W}}_{O-1}$. Again,  weight $\widetilde{\textbf{W}}_{O-1}$ is fixed by another NC weight  $\widetilde{\textbf{W}}_{NC}$ defined similarly as above. This indicates that we can add a regularization term  in model \eqref{OP31}, 
     \begin{equation} 
 {\rm CE} \big( \textbf{Y},  {\rm softmax} (\textbf{H}_{O-2} \widetilde{\textbf{W}}_{NC})\big),
 \end{equation}
 where ${\rm CE}(\cdot)$ is a cross entropy loss. This together with \eqref{OP31-con}, model \eqref{OP31} turns to have the following form,
\begin{align}
\label{OP31-reg} 
&\min\limits_{\widetilde{\mathcal{W}}} ~L( \textbf{Y}, \widetilde{\textbf{Y}})  + \mu {\rm CE}( \textbf{Y},  \textbf{H}_{O-2} \widetilde{\textbf{W}}_{NC})\\
&{~\rm{s.t.}}~~~ \widetilde{\textbf{Y}}={\rm softmax}(\textbf{W}_{NC}\sigma(\textbf{W}_{O-1}\cdots\sigma(\textbf{W}_2\sigma(\textbf{W}_1\textbf{X})))),\nonumber
\end{align}
where ${\widetilde{\mathcal{W}}=\{\textbf{W}_1,\textbf{W}_2,\cdots,\textbf{W}_{O-1}\}}$ and ${\mu>0}$. In the next section, we will explore the above model to develop an efficient algorithm for signal detection problems.

        \section{Neural Collapse inspired Deep Supervised Federated learning} \label{ncdsfl_framework_section}
        
        In this section, we first introduce the framework of FL and develop the  NCDSFL algorithm based on it.

        \subsection{Federated Learning}
        A FL system consisting of one central server and $J$ clients aims to learn a shared model for all clients to minimize the total loss function without raw data exchange.
        Specifically, this learning process can be formulated as the following optimization problem,
        \begin{equation}\label{FL_opt}
            \begin{aligned}
                \min_{\mathbf{v},\mathbf{v}^i\in\mathcal{F},i\in[J]} \quad & \sum_{i\in[J]} \Big(L(\mathbf{v}^i;\mathcal{D}^i)+R(\mathbf{v}^i;\mathcal{D}^i)\Big),\\
                {\rm s.t.}~~~~\quad&~ \mathbf{v}^i=\mathbf{v},~~i\in[J],
            \end{aligned}
        \end{equation}
        where $\mathbf{v}^i$, $\mathbf{v}$, $\mathcal{F}$ and $\mathcal{D}^i$ denote the trainable local parameters at client $i$, the shared global parameters, the feasible region, and the dataset of client $i$, respectively,
        $L(\cdot;\cdot)$ denotes the local loss function used to train the model locally and depends on the local parameters and the local dataset, and $R(\cdot;\cdot)$ is a regularization term added to a local loss to guide training.

In a canonical centralized FL system, the learning process repeats the following three key steps until the model parameters converge or some terminating conditions are met:
        \begin{itemize}[leftmargin=12pt]
        \item \textbf{Local update}: Each client updates its local model based on its local parameters and dataset by an optimization algorithm such as stochastic gradient descent or Adam \cite{kingma2014adam}.
        \item \textbf{Aggregation}: After several local updates, all clients or a part of them send their trained local model parameters to the server. The server aggregates the models by averaging the received parameters.
        \item \textbf{Broadcast}: The server then broadcasts the aggregated global model to all clients.
           \end{itemize} 
  \subsection{DS DNNs with NC Weights in FL}       
         Note that in FL, model aggregation occurs after several local updates. This can lead to client drift and slow down the convergence, especially when local datasets are heterogeneous.
To facilitate more efficient aggregation, a natural approach is to align hidden features across clients.
According to the NC theory \cite{kothapalli2022neural}, the hidden features (e.g., $\textbf{H}_{O-1}$ in the direct embedding and $\textbf{H}_{O-2}$ in the auxiliary embedding)  and their classifiers (i.e., $\textbf{W}_{NC}$ and $\widetilde{\textbf{W}}_{NC}$) eventually collapse (i.e., satisfying NC1 and NC2) and become linearly aligned (i.e., satisfying NC3).  At the same time, DS \cite{li2022comprehensive} provides direct guidance to intermediate hidden features (e.g., $\textbf{H}_{O-2}$).

By integrating these two insights, we can guide and align hidden feature distributions across clients, thereby promoting consistency among local models. To achieve this, we incorporate model \eqref{OP31-reg} into the FL framework. Specifically, for $J$ clients, we equip each client $i$ with $E$ DNNs with ${E\geq1}$, which have different learnable parameters. Therefore, there are $EJ$ DNNs in total. The dataset, $\mathcal{D}^i$, for client $i$ is
 \begin{align}
        &\mathcal{D}^i= \left\{\mathcal{D}^{i,1},\mathcal{D}^{i,2},\cdots,\mathcal{D}^{i,E}\right\},\\
         &\mathcal{D}^{i,e}= \left\{(\mathbf{x}_1^{i,e},\mathbf{y}_1^{i,e}),\cdots,(\mathbf{x}_N^{i,e},\mathbf{y}_N^{i,e})\right\},~~e\in[E],
        \end{align}
where $\mathcal{D}^{i,e}$ is the dataset for the $e$th DNN for client $i$. Let \begin{equation}
    \textbf{X}^{i,e}=\left[\textbf{x}_1^{i,e},\cdots, \textbf{x}_N^{i,e}\right], ~~\textbf{Y}^{i,e}=\left[\textbf{y}_1^{i,e},\cdots, \textbf{y}_N^{i,e}\right]
\end{equation} be the input and the label matrix for the $e$th DNN of client $i$.
Then variables  in \eqref{FL_opt} for each ${i\in[J]}$ are specified by,
        \begin{align}
        &\mathbf{v}^i=\left\{\widetilde{\mathcal{W}}^{i,1},\widetilde{\mathcal{W}}^{i,2},\cdots,\widetilde{\mathcal{W}}^{i,E}\right\},\\
       & \widetilde{\mathcal{W}}^{i,e} =\left\{\textbf{W}_1^{i,e},\textbf{W}^{i,e}_2,\cdots,\textbf{W}^{i,e}_{O-1}\right\},~~e\in[E],
        \end{align}
        where $\widetilde{\mathcal{W}}^{i,e} $ collects all weights to be trained for the $e$th DNN of client $i$. Then functions in \eqref{FL_opt} for each $i$ are specified by,     
                \begin{align}
  & L(\mathbf{v}^i;\mathcal{D}^i) = \sum_{e\in[E]}  L\left( \textbf{Y}^{i,e}, \widetilde{\textbf{Y}}^{i,e}\right),\\
     &  R(\mathbf{v}^i;\mathcal{D}^i) = \mu \sum_{e\in[E]}   {\rm CE}\left( \textbf{Y}^{i,e},  \textbf{H}_{O-2}^{i,e} \widetilde{\textbf{W}}_{NC}\right),
        \end{align}
        where  $ \widetilde{\textbf{Y}}^{i,e}$ is given similar to that in \eqref{OP31-reg}, namely,              
        \begin{align*}
  \widetilde{\textbf{Y}}^{i,e}={\rm softmax}\left(\textbf{W}_{NC}\sigma(\textbf{W}_{O-1}^{i,e}\cdots\sigma(\textbf{W}_2^{i,e}\sigma(\textbf{W}_1^{i,e}\textbf{X}^{i,e})))\right).
        \end{align*}
One can observe that for each client $i$, we use two fixed NC weights: $\textbf{W}_{NC}$ fixed at the penultimate hidden layer for all $E$ primal networks and $\widetilde{\textbf{W}}_{NC}$ fixed at the auxiliary layer for all $E$ auxiliary networks. These two weights are irrelevant to client $i$ and can be calculated in advance according to NC2.

        \subsection{The NCDSFL Algorithm}

        \begin{algorithm}[!t]
			\caption{NCDSFL: Neural Collapse inspired Deep Supervised Federated Learning} 
			\begin{algorithmic}[1] \label{NCDSFL}
				\STATE	 \textbf{Initiation}: Initialize learnable parameters $\{\mathbf{v}^{i,0},i\in[J]\}$ for all clients and ${\mathbf{v}^{0} = ({1}/{[J]})\sum_{i=1}^J \mathbf{v}^{i,0}}$. Generate two NC weights according to NC2 in Definition \ref{nc_def}.\\
				
				\FOR{epoch index $k = 0, 1, \ldots, K$} 
				\STATE \texttt{--Local learning--} 
				\FOR{each client ${i \in [J]}$ in parallel}
				\STATE Updates its local model by $\mathbf{v}^{i,0} = \mathbf{v}^{k}$.
                \FOR{local update iteration $u=0,1,\ldots,U$}
				\STATE Update  $\mathbf{v}^{i,u+1}=\mathcal{G}(\mathbf{v}^{i,u};\mathcal{D}^i)$ locally.	
                \ENDFOR                
				\ENDFOR 				
                \STATE \texttt{--Global aggregation--} .
				\STATE The server collects $\mathbf{v}^{i,U}$ from all clients  ${i\in[J]}$, updates the global model by ${\mathbf{v}^{k+1} = ({1}/{[J]})\sum_{i=1}^J \mathbf{v}^{i,U}}$, and broadcasts it to all clients.
				\ENDFOR 		
			\end{algorithmic}
		\end{algorithm}

        The overall NCDSFL algorithm is given in \textbf{Algorithm} \ref{NCDSFL}, where $\mathcal{G}(.;.)$ in step 7 denotes a local update step based on some optimization criterion, such as gradient descent. One advantageous property of the NCDSFL algorithm is the exploration of two fixed NC weights, which helps guide and align the hidden feature distributions across clients. This promotes consistency among local models and mitigates the effects of data heterogeneity.
       Another key benefit lies in its reduced computational complexity and communication overhead. By fixing the auxiliary classifiers and output layer weights, the algorithm requires fewer variables to be updated and transmitted, making it more practical for real-world deployment.

    \subsection{Application into Signal Detection}\label{section_sys_model}

         The signal detection task can be formulated as a multi-binary classification problem by treating the prediction of each bit as a binary classification problem. The time-domain transmission model of the OFDM system can be formed by
        \begin{equation}\label{time_domain}
            b(t) = h({t,}\tau)*a(t)+u(t),
        \end{equation}
        where ${b(t)\in \mathbb{C}}$ and ${a(t)\in \mathbb{C}}$ denote the received signal and the time-domain transmit signal at time $t$, respectively, $u(t)\in \mathbb{C}$ is the additive Gaussian white noise (AWGN) at the receiver side,
        ${\{h({t},\tau):\tau=0,1,\ldots,Q-1\}}$ is the channel impulse response sequence with a length of $Q$ {at time $t$}, which is generated from the PDPs defined in the wireless world initiative for new radio (WINNER II) channel model \cite{kyosti2007winner}.  After removing the cyclic prefix signals,  $b(t)$ is passed through a discrete Fourier transform (DFT) module, and we have the following corresponding frequency-domain transmission expression of \eqref{time_domain},
        \begin{equation}\label{freq_domain}
            B(r) = A(r)H(r) + U(r),
        \end{equation}
        where $B(r)$, $A(r)$, $H(r)$ and $U(r)$ are the DFT of $b(t)$, $a(t)$, $h(t{,\tau})$ and $u(t)$, respectively.

        We adopt the commonly used multi-path channel model \cite{kyosti2007winner} to characterize the OFDM channels, namely,
        \begin{equation}\label{channel_model}
            h(t, \tau) = \sum_{m=0}^{M-1}\sqrt{P_m}A_m(t)\exp(-j\phi_m(t))\delta(\tau - \tau_m),
        \end{equation}
        where $P_m\in\mathbb{R}$ denotes the channel power gain over path $m$, $A_m\in \mathbb{C}$ characterizes the attenuation fluctuation and the phase shift of path $l$, which depends on the initial phase shift, the central frequency, the small scale fading, and the shadowing of path $m$, $\tau_m$ denotes the propagation delay of path $m$. 
        These parameters highly depend on the PDPs of users.
        Note that in this work, we focus on the low-mobility case. We therefore ignore the time variance caused by the user's mobility in \eqref{channel_model}.

        In wireless communication systems, the PDPs of different users depend on the relative positions among users, base stations, and scatterers. 
        Therefore, in cases of low mobility, as illustrated in Fig. \ref{sys_diagram}, each user can only observe the wireless channels characterized by one PDP within a given time period, resulting in the local observability problem.
        Users need to cooperate to learn a general model that works for newly joined users while preserving data privacy.
        We thus adopt the idea of FL to learn the general model, i.e.,  model \eqref{FL_opt},  for all users in a distributed manner.    
        
        Mathematically, each user (i.e., client) $i$ trains $E$ DNNs to predict bits at different positions of the transmit bit stream and these $E$ DNNs use the same training dataset with the same inputs and different labels $\mathcal{D}^{i,e}=\{(\mathbf{x}^i_1, \mathbf{y}^{i,e}_1),\cdots,(\mathbf{x}^i_N, \mathbf{y}^{i,e}_N)\}$, where $\mathbf{x}^i_n\in\mathbb{R}^{2d_x^i}$ and $\mathbf{y}^{i,e}_n\in\mathbb{R}^{2d_y^{i,e}}$  are calculated by received signal $b^i(t)$ and transmit signal $A^i(r)$ as follows, 
         \begin{align}\label{del-xin}
         \mathbf{x}^i_n =\Big(&{\rm Re}(b^i(t_n^i)),{\rm Im}(b^i(t_n^i)),\nonumber\\
         &{\rm Re}(b^i(t_n^i+1)),{\rm Im}(b^i(t_n^i+1)),\cdots,\\
         &{\rm Re}(b^i(t_n^i+d_x^i-1)),{\rm Im}(b^i(t_n^i+d_x^i-1))\Big)^T,\nonumber\\\label{del-yin}
           \mathbf{y}^{i,e}_n =\Big(&{\rm Re}(A^i(r_n^{i,e})),
           {\rm Im}(A^i(r_n^{i,e})),\nonumber\\
           &{\rm Re}(A^i(r_n^{i,e}+1 )),{\rm Im}(A^i(r_n^{i,e}+1 )),\cdots,\\
         &{\rm Re}(A^i(r_n^{i,e}+d_{y}^{i,e}-1)),{\rm Im}(A^i(r_n^{i,e}+d_{y}^{i,e}-1))\Big)^T, \nonumber
\end{align} 
where ${\rm Re}(b)$ and ${\rm Im}(b)$ represent the real and imagery part of $b\in\mathbb{C}$, and $r_n^{i,e+1}= r_n^{i,e}+d_{y}^{i,e}$. The above formulation means that we use  $d_x^i$ received signals $\{b^i(t): t\in t_n^i-1+[d_x^i]\}$  to generate the $n$th input $\mathbf{x}^i_n$ and  $d_y^{i,e}$ transmit signals $\{A^i(r): r\in r_n^{i,e}-1+[d_y^{i,e}]\}$ to generate the $n$th label $\mathbf{y}^{i,e}_n$ of the $e$th DNNs. Denote the total length of the label for user $i$ by $d_y^{i}=\sum_{e\in[E]}d_y^{i,e}$. After specifying local training data $\mathcal{D}^i$ for each user $i$, we apply NCDSFL to solve model \eqref{FL_opt}, yielding a solution $\mathbf{v}^i_*\approx\mathbf{v}_*,i\in[J]$, where $\mathbf{v}^i_*$ represents all the weights of the $E$ DNNs for client $i$. These trained networks can then be used to detect newly emerged signals based on the learned representations.

        \section{Simulation Results} \label{section_simulation}
        In this section, we conduct some experiments on joint channel estimation and symbol detection to evaluate the performance of the proposed NCDSFL algorithm. We adopt the B1 non-line-of-sight (NLoS) scenario from the  WINNER II model \cite{kyosti2007winner} as the channel generator.
Specifically, the channel consists of 24 propagation paths. The carrier frequency is set to 2.6 GHz with a bandwidth of 20 MHz.
We use typical urban channel conditions with a maximum delay of 16 sampling periods and a shadow fading standard deviation of 4 dB.
The delay spread follows a log-normal distribution with a mean of $-7.12 (\log_{10}[s])$ and a standard deviation of $ 0.12 (\log_{10}[s])$, where $s$ represents `second'.

        There are 10  users cooperating to train a general model so that newly joined users can also use this model without retraining.
        Client users only have access to their own data, and they can only observe a partial wireless environment.
        Therefore, we use the same PDP to generate 500 channels for the same user with randomly generated small scale channel parameters.
        These channel realizations are then fixed as a local dataset during the whole training process.
        The transmit data bit streams are randomly generated for each local training iteration, while the pilot bits are fixed in both training and testing phases.
        We test the bit error rate (BER) performance of the trained model on a general testing dataset generated from all PDPs with different receiving SNRs.

For each user ${i\in[10]}$, ${E=4}$ fully connected DNNs are trained, and each DNN consists of three hidden layers with 500, 250, and 128 neurons, using the ReLU function as activation function $\sigma$. The $n$th sample ${(\mathbf{x}^i_n,\mathbf{y}^i_n)}$ is generated as \eqref{del-xin} and \eqref{del-yin} with ${d_x^i=128}$ (including pilot signal) and  ${d_y^{i,e}=16}$. In Algorithm \ref{NCDSFL}, Each local training epoch consists of 50 local iterations to update ${\mathbf{v}^{i,u+1}=\mathcal{G}(\mathbf{v}^{i,u};\mathcal{D}^i)}$ using the RMSprop optimizer\footnote{RMSprop is an unpublished adaptive learning rate algorithm proposed by Geoff Hinton in Lecture 6e of his Coursera Class.}.
        The regularization coefficient in \eqref{OP31-reg} is set as $\mu=0.5$.
        
       We compare the proposed NCDSFL algorithm with two baselines: (1) the FedAvg-based signal detector \cite{mcmahan2017communication}, denoted as FL, and (2) the independent learning-based signal detector, denoted as IL. Both baselines use the same learning hyperparameters as NCDSFL. We also compare our method with the conventional minimum mean squared error (MMSE) signal detection algorithm, which uses MMSE-based channel estimates to recover the transmitted symbols.

        \subsection{Validation of Theorem \ref{NC_theorem}}
  
        We first validate the NC phenomenon in the signal detection scenario.
        In this experiment, we focus on the training process of a single client without the FL manner. 
        To validate Theorem \ref{NC_theorem}, we track two quantities during the training process, 
    \begin{equation}
        \begin{aligned}     \theta(\mathbf{W}_0,\mathbf{W}_1)&=\left\|\frac{I\mathbf{W}^T_0\mathbf{W}_0}{\|\mathbf{W}^T_0\mathbf{W}_0\|} - \mathbf{I}\right\|+\left\|\frac{I\mathbf{W}^T_1\mathbf{W}_1}{\|\mathbf{W}^T_1\mathbf{W}_1\|} - \mathbf{I}\right\|,\\
     \vartheta(\mathbf{W})&=\left\|\frac{{\rm vec}(\mathbf{H})}{\|{\rm vec}(\mathbf{H})\|}-\frac{\left[\mathbf{A}^T,-\mathbf{A}^T\right]{\rm vec}(\mathbf{W})}{\|\left[\mathbf{A}^T,-\mathbf{A}^T\right]{\rm vec}(\mathbf{W})\|}\right\|,
\end{aligned}    
    \end{equation}        
where $\mathbf{W}_0=[\mathbf{w}_{1,0},\ldots,\mathbf{w}_{I,0}]$ and $\mathbf{W}_1=[\mathbf{w}_{1,1},\ldots,\mathbf{w}_{I,1}]$.
         The results are shown in Figs. \ref{nc2} and \ref{nc3}.     
                   \begin{figure}[!th]
			\centering\vspace{-5mm}
			\includegraphics[width=0.45\textwidth]{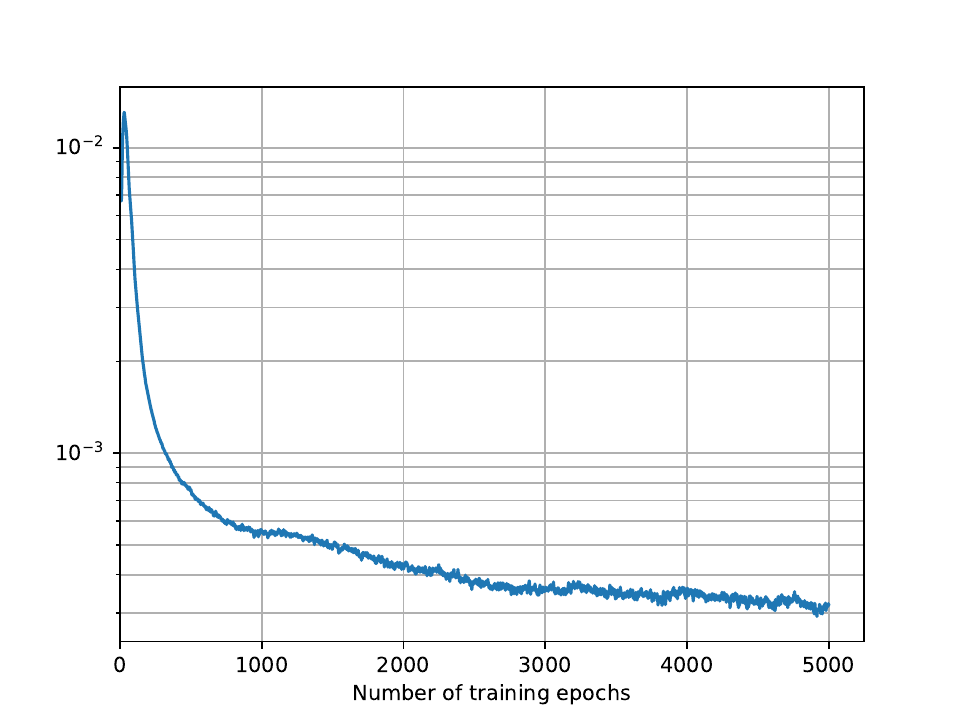}\vspace{-2mm}
			\caption{ $\theta(\mathbf{W}_0,\mathbf{W}_1)$ v.s.  epochs}
			\label{nc2} 
		\end{figure} 
		
		 The two figures show that both $ \theta$ and $\vartheta$ decrease monotonically throughout the training process. The decreasing trend of $ \theta$ suggests that the orthogonality among classifiers in the parallel binary classification tasks strengthens as training progresses. Similarly, the decline in $\vartheta$ indicates an increasing duality between the hidden features and their corresponding classifiers. The behavior of $ \theta$ supports our approach of employing fixed orthogonal classifiers, while the trend in $\vartheta$ implies that the hidden features gradually converge toward the subspace spanned by these orthogonal classifiers. The convergence underpins the feature alignment achieved via deep supervision using NC weights. Together, these confirm the presence of the NC phenomenon in the signal detection task.

            \begin{figure}[!th]
			\centering
			\includegraphics[width=0.45\textwidth]{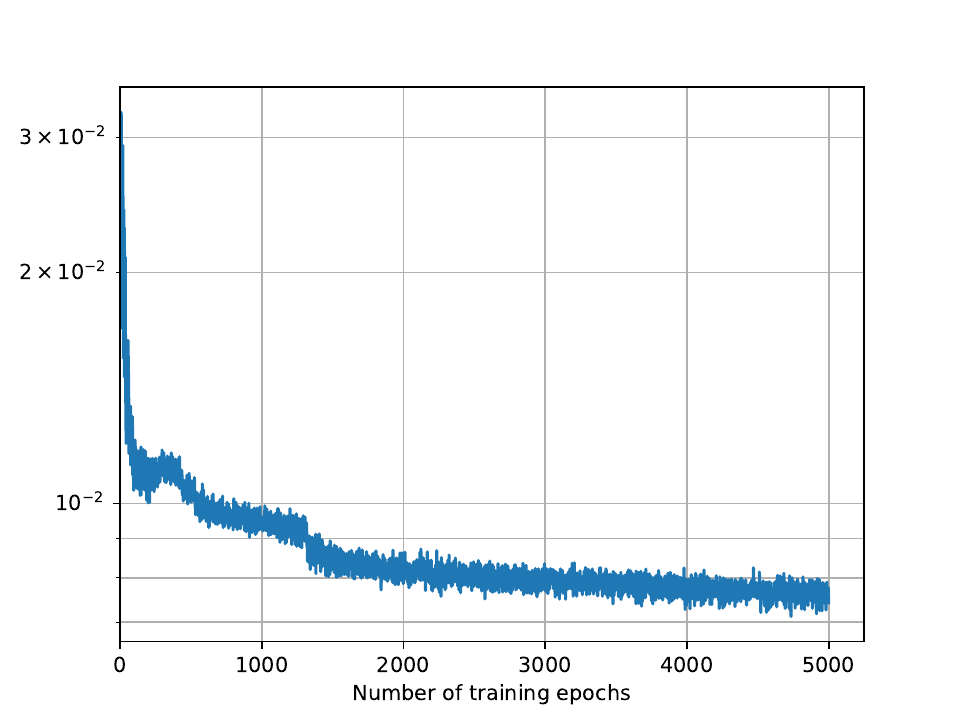}
			\caption{ $\vartheta(\mathbf{W})$ v.s. epochs}
			\label{nc3} 
		\end{figure}  
        
        \subsection{Performance Comparison with Baselines}
        Next, we compare the considered algorithms in our aforementioned FL learning scenario with a received signal SNR of 10dB.
        Fig. \ref{convergence} shows the BER performance on the validation dataset versus the number of training epochs.
        We can observe that the IL algorithm has the slowest convergence speed and worst performance, because each client only has partial observations of the wireless environment and cannot learn a general model for a general validation dataset.
        Our proposed NCDSFL algorithm and the FL algorithm have the same convergent BER performance.
        However, our proposed NCDSFL algorithm converges much faster than an FL algorithm.
        The NCDSFL algorithm converges at about 60 training epochs, while the FL algorithm converges at about 150 training epochs.
        This is because the hidden features of different clients are guided to a similar distribution defined by the NC weights as stated in Theorem \ref{NC_theorem}, which alleviates the impact of client drift effect and data discrepancy.
        Therefore, the NCDSFL algorithm can save communication rounds.
        
        \begin{figure}[!th]
			\centering
			\includegraphics[width=0.45\textwidth]{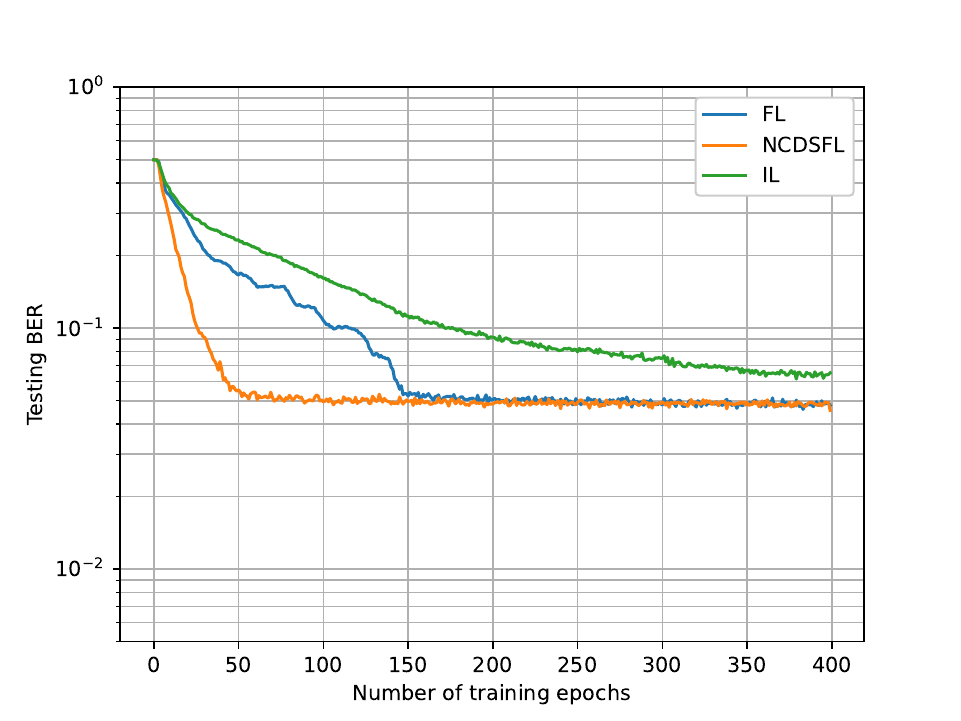}
			\caption{Testing BER v.s. epochs (SNR=10dB)}
			\label{convergence}
		\end{figure}

        In wireless systems, the large scale fading parameters, which highly depend on the relative positions of transmitters, receivers, and scatters, mainly determine the SNR of the received signal, even if the transmit power is adjustable.
        Therefore, we simulate a more practical scenario, where different client users receive signals with different SNRs.
        In this experiment, the SNRs are picked from $\{0,5,10,15,20\}dB$, with each SNR value corresponding to two client users.
        Fig. \ref{snr} draws the testing BER curves of the FL algorithm and the NCDSFL algorithm under different SNR conditions during the training process.
        We can observe from the figure that our proposed NCDSFL algorithm converges faster than an FL algorithm. 
        In addition, as the client heterogeneity increases, the convergent BER of our proposed NCDSFL algorithm also outperforms the FL algorithm.
        This validates the effectiveness of aligning clients by NC based DS in FL.
        \begin{figure}[!t]
			\centering
			\includegraphics[width=0.45\textwidth]{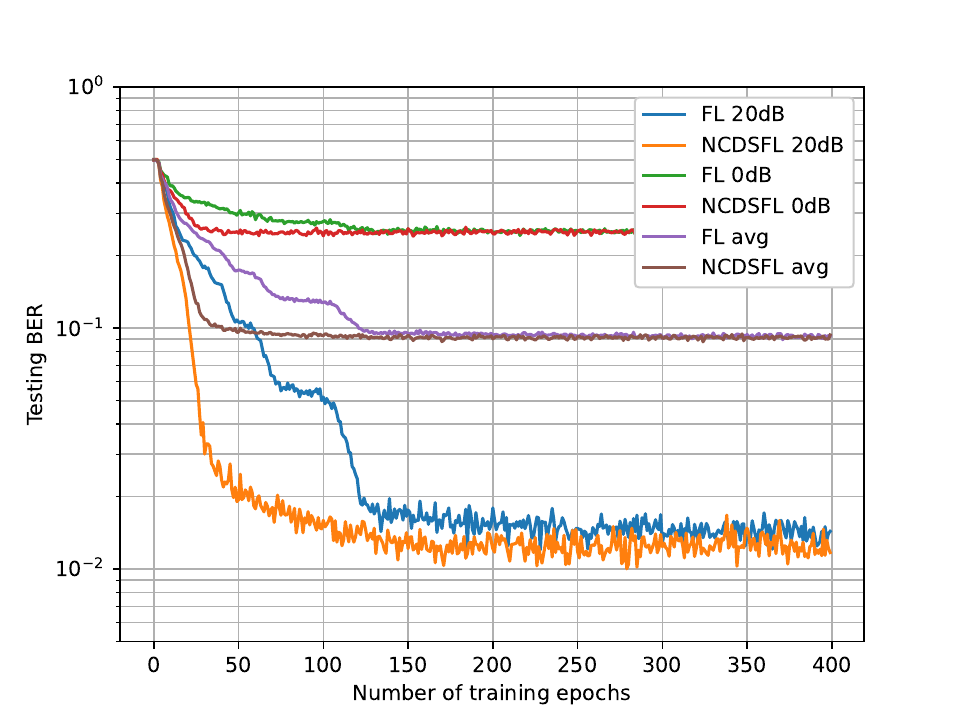}
			\caption{Testing BER v.s.  epochs }
			\label{snr}
		\end{figure}
        We summarize the testing performance of different algorithms versus SNR in Fig. \ref{snr_snr}.
        We can observe that our proposed NCDSFL algorithm always has equal or better testing performance than the FL algorithm.
        Moreover, the testing performance gap between the FL algorithm and the NCDSFL algorithm enlarges with the received signal SNR value.
        In addition, both the FL algorithm and the NCDSFL algorithm outperform the conventional MMSE baseline. This is because, due to a limited number of pilots, the MMSE algorithm cannot estimate the channels accurately and thus has a worse BER performance, while both the FL algorithm and the NCDSFL algorithm can learn the implicit OFDM channel structure from the received signals.
        \begin{figure}
			\centering
			\includegraphics[width=0.45\textwidth]{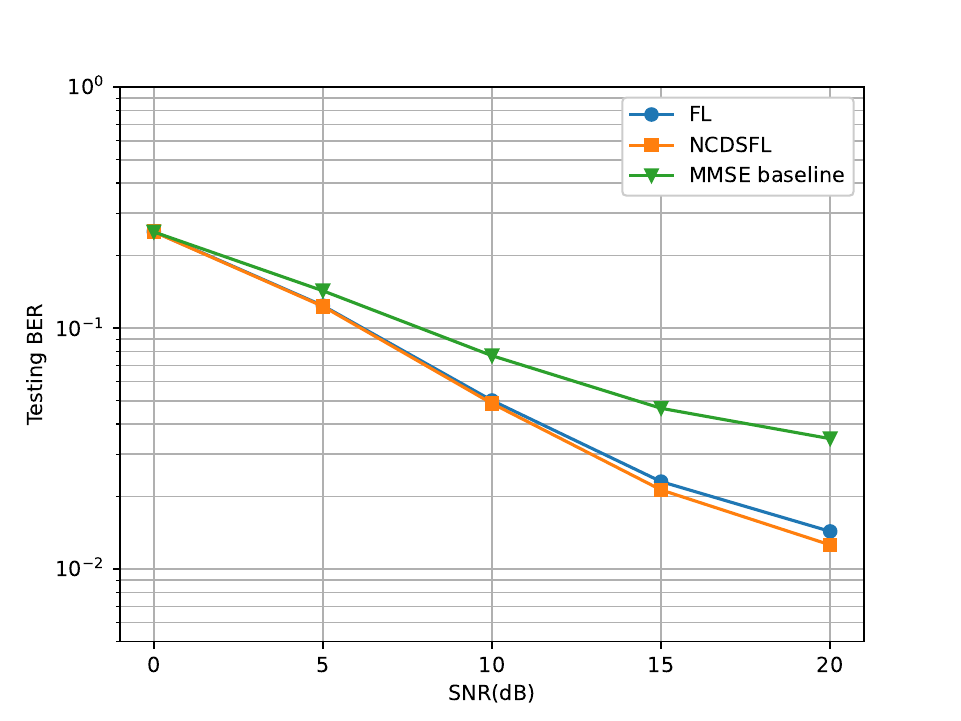}
			\caption{Testing BER v.s. SNR}
			\label{snr_snr}
		\end{figure}

        In the next experiment, we further investigate the impact of the LoS path.
        We add a LoS path to each client's channel and set different Rician factors for different users. The SNRs are set the same as in the last experiment, while the Rician factors are picked from $\{-5, 0, 5,10, 15\}dB$, with each value corresponding to two client users.
        Fig. \ref{rician_factor} draws the validation BER curve of different algorithms with different SNRs.
        We can observe that our proposed NCDSFL algorithm still converges faster than the FL algorithm.
        Besides, the BER performance of all algorithms is improved as the presence of LoS paths makes the signal detection problem easier to solve.
        \begin{figure}
			\centering
			\includegraphics[width=0.45\textwidth]{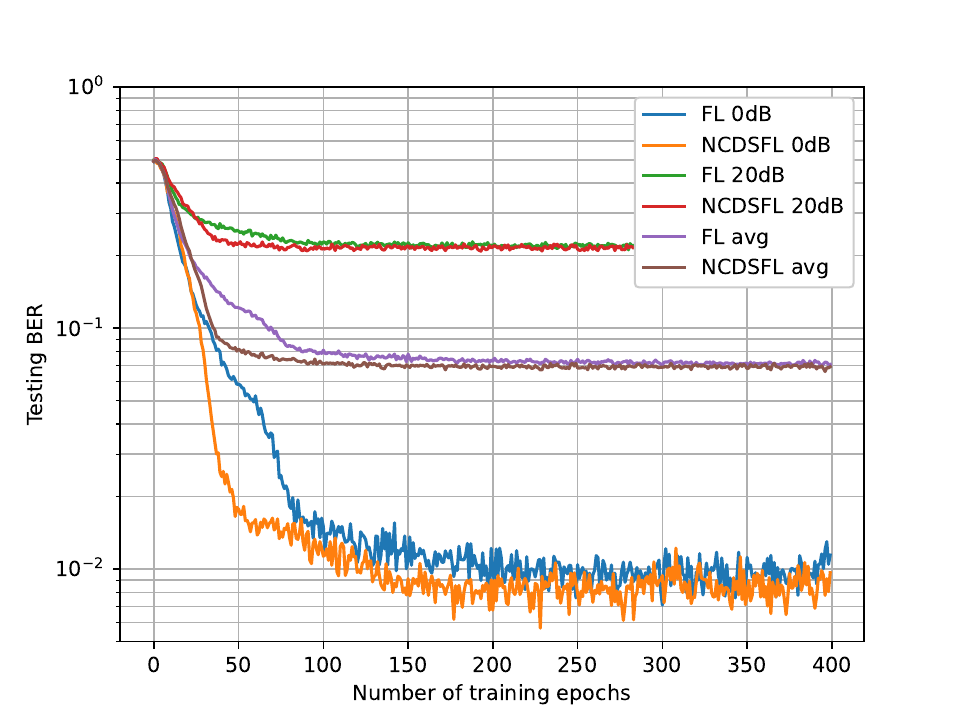}
			\caption{Testing BER v.s. epochs}
			\label{rician_factor}
		\end{figure}
        Fig. \ref{rician_factor_snr} draws the testing BER performance of different algorithms with different SNRs.
        We can observe that the testing performance of all algorithms is improved due to the presence of LoS path, and the learning based methods outperform the MMSE baseline due to their ability to learn the implicit information from the training data.
        \begin{figure}
			\centering
			\includegraphics[width=0.45\textwidth]{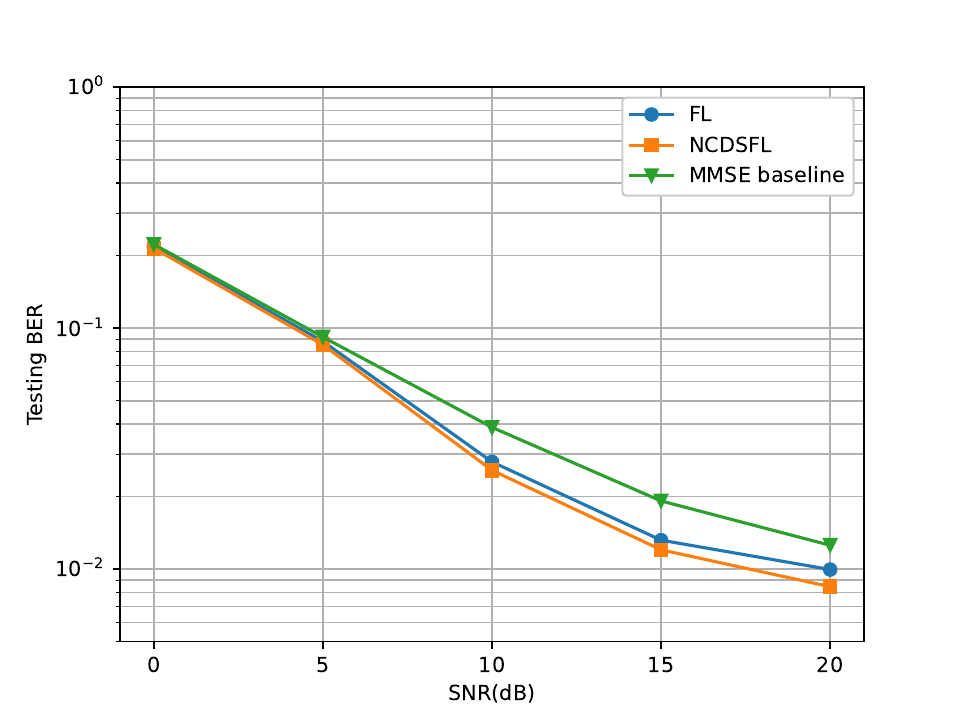}
			\caption{Testing BER v.s. SNR}
			\label{rician_factor_snr}
		\end{figure}

        In the final experiment, we introduce 5 additional clients whose channels are Rician channels with a Rician factor of 0dB.
        The SNR is set to 10dB.
        Fig. \ref{mixed_channels} draws the BER performance on WINNER II dataset and on Rician dataset of the trained model during the training process.
        We can observe that our proposed NCDSFL algorithm converges faster than the FL algorithm, due to the more efficient aggregation of the NCDSFL algorithm.
        Moreover, the performance of Rician channels is worse than the WINNER II channel.
        In Rician channels, except for the LoS path, the NLoS components are irrelevant.
        Therefore, with a limited number of pilots, it is more difficult to estimate the channels across all subcarriers for Rician channels compared with WINNER II channels.
        \begin{figure}
			\centering
			\includegraphics[width=0.45\textwidth]{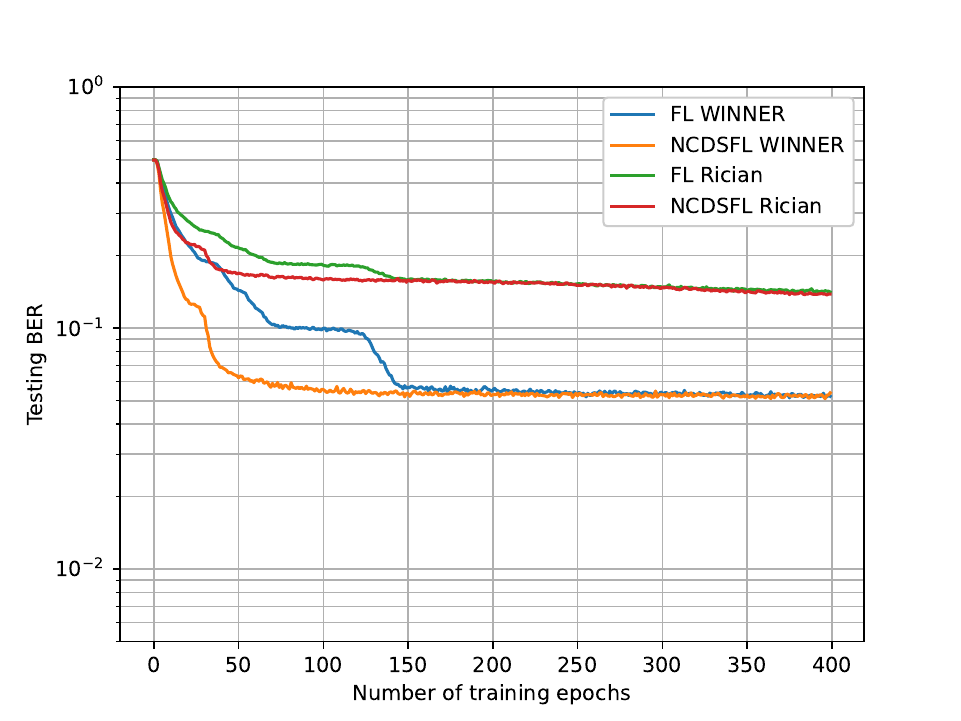}
			\caption{Testing BER  v.s. epochs (SNR=10dB)}
			\label{mixed_channels}
		\end{figure}
        
    \section{Conclusion} \label{section_conclusion}
        
      In this paper, we developed the NCDSFL algorithm by integrating the NC solutions with the DS scheme within an FL framework. This approach not only enables the effective aggregation of local knowledge from clients but also mitigates the impact of data heterogeneity, thereby accelerating convergence. Simulation results on joint channel estimation and symbol detection in OFDM systems demonstrate the efficiency of the proposed algorithm. Beyond this specific application, the NCDSFL algorithm holds potential for broader use in other domains, such as computer vision, which warrants further investigation in future research.

\begin{appendices}
\section{Proof of Theorem \ref{NC_theorem}} \label{proof}

        \subsection{Useful Lemmas}
        \begin{lemma}\label{row-orthogonal}
            Matrix $\mathbf{A}$ is row-orthogonal and has an equal norm of $\sqrt{K2^I}$ for each row.
        \end{lemma}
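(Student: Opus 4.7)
The plan is to show the stronger identity $\mathbf{A}\mathbf{A}^T = K\,2^{I}\,\mathbf{I}_{dI}$, from which both claims (row orthogonality and common row norm $\sqrt{K2^{I}}$) follow immediately by reading off the diagonal and off-diagonal entries. Since $\mathbf{A}$ is just the horizontal concatenation of $K$ copies of $\mathbf{B}^{I}$, we have $\mathbf{A}\mathbf{A}^T = K\,\mathbf{B}^{I}(\mathbf{B}^{I})^T$, so the whole problem reduces to establishing
\begin{equation*}
\mathbf{B}^{i}(\mathbf{B}^{i})^T = 2^{i}\,\mathbf{I}_{di}, \qquad i=1,2,\ldots,I.
\end{equation*}

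First, I would check the base case $i=1$ directly: $\mathbf{B}^1(\mathbf{B}^1)^T = (-\mathbf{I}_d)(-\mathbf{I}_d)^T + \mathbf{I}_d\mathbf{I}_d^T = 2\mathbf{I}_d$. Next, I would set up the inductive step by computing the block product with the recursive definition \eqref{B_def}:
\begin{equation*}
\mathbf{B}^{i+1}(\mathbf{B}^{i+1})^T=
\begin{bmatrix} \mathbf{B}^i & \mathbf{B}^i\\ -\mathbf{C}^i & \mathbf{C}^i\end{bmatrix}
\begin{bmatrix} (\mathbf{B}^i)^T & -(\mathbf{C}^i)^T\\ (\mathbf{B}^i)^T & (\mathbf{C}^i)^T\end{bmatrix}.
\end{equation*}
The two off-diagonal blocks vanish automatically because of the sign pattern ($\pm \mathbf{B}^{i}(\mathbf{C}^{i})^T$ and $\mp \mathbf{C}^{i}(\mathbf{B}^{i})^T$ cancel), while the diagonal blocks give $2\,\mathbf{B}^{i}(\mathbf{B}^{i})^T$ in the upper-left and $2\,\mathbf{C}^{i}(\mathbf{C}^{i})^T$ in the lower-right. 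Using the inductive hypothesis for the first and the trivial identity $\mathbf{C}^{i}(\mathbf{C}^{i})^T = 2^{i}\,\mathbf{I}_d$ (an easy consequence of $\mathbf{C}^{i}$ being the row-concatenation of $2^{i}$ copies of $\mathbf{I}_d$) for the second, both diagonal blocks become $2^{i+1}\mathbf{I}$ of the appropriate size, yielding $\mathbf{B}^{i+1}(\mathbf{B}^{i+1})^T = 2^{i+1}\mathbf{I}_{d(i+1)}$ and closing the induction.

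Finally, combining this with $\mathbf{A}\mathbf{A}^T = K\,\mathbf{B}^{I}(\mathbf{B}^{I})^T$ gives $\mathbf{A}\mathbf{A}^T = K2^{I}\mathbf{I}_{dI}$, which is precisely the statement that the rows of $\mathbf{A}$ are mutually orthogonal and that each row has squared norm $K2^{I}$. There is no real obstacle here; the only thing that needs a bit of care is checking the sign-cancellation in the off-diagonal blocks, which is where the specific choice of signs in the recursion \eqref{B_def} is being used, and verifying that the block dimensions match so that the induction is well-posed (the top block has size $di \times di$, the bottom has size $d \times d$, and their sum $d(i+1)$ agrees with the number of rows of $\mathbf{B}^{i+1}$).
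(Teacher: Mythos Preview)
Your proposal is correct and mirrors the paper's own proof almost line for line: both reduce to showing $\mathbf{B}^{i}(\mathbf{B}^{i})^T = 2^{i}\mathbf{I}$ by induction on $i$, using the sign cancellation in the off-diagonal blocks of the recursion \eqref{B_def} together with $\mathbf{C}^{i}(\mathbf{C}^{i})^T = 2^{i}\mathbf{I}_d$, and then conclude $\mathbf{A}\mathbf{A}^T = K\,\mathbf{B}^{I}(\mathbf{B}^{I})^T = K2^{I}\mathbf{I}_{dI}$. There is nothing to add.
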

        \begin{proof}
            Since $\mathbf{C}^i (\mathbf{C}^i )^T = 2^{i} \mathbf{I}$, each matrix $\mathbf{C}^i,i=1,2,\ldots,I$ is row-orthogonal with equal norm row vectors. Now, we prove $\mathbf{B}^i$ for each ${i=1,2,\ldots,I}$ is row-orthogonal by mathematical induction.            
            When $i=1$, $\mathbf{B}^1 (\mathbf{B}^1)^T = 2^1 \mathbf{I}$.
            Assume that $\mathbf{B}^i$ satisfies $\mathbf{B}^i (\mathbf{B}^i)^T = 2^{i} \mathbf{I}$ for any ${i=1,2,\ldots I-1}$. For $i=I$, it follows from \eqref{B_def} that
                        \begin{equation}
            \begin{aligned}
           \mathbf{B}^{I}\left(\mathbf{B}^{I} \right)^T
           &=  \begin{bmatrix}
                2\mathbf{B}^{I-1}\left(\mathbf{B}^{I-1} \right)^T &\mathbf{0} \\
                 \mathbf{0} &2\mathbf{C}^{I-1}\left(\mathbf{C}^{I-1} \right)^T
            \end{bmatrix},\\
            &=  \begin{bmatrix}
               2^{I} \mathbf{I} &\mathbf{0} \\
                 \mathbf{0} & 2^{I} \mathbf{I} 
            \end{bmatrix}   = 2^{I} \mathbf{I}.
            \end{aligned}
            \end{equation}
 Using this fact and   \eqref{A_def}, we have 
            \begin{equation}\label{AAT-K2I}
            \begin{aligned}
          \mathbf{A}\mathbf{A}^T  &=  K \mathbf{B}^I\left(\mathbf{B}^I \right)^T =  K 2^{I} \mathbf{I},
            \end{aligned}
            \end{equation}
finishing the proof.
        \end{proof}
        For simplicity, let $\mathbf{I}_{dI}\in\mathbb{R}^{dI\times dI}$ and
        \begin{align} \label{w_def}
         \mathbf{h} &= {\rm vec}(\mathbf{H}),\nonumber\\
          \mathbf{w}&=[\mathbf{I}_{dI}, -\mathbf{I}_{dI}] {\rm vec}(\mathbf{W})\\
          &= \Big[(\mathbf{w}_{1,0}-\mathbf{w}_{1,1})^T,\ldots, (\mathbf{w}_{I,0}-\mathbf{w}_{I,1})^T\Big]^T.\nonumber
        \end{align}
Based on definitions of $\mathbf{H}$,  $\mathbf{W}$, and $\mathbf{w}$ in \eqref{def-W-H} and \eqref{w_def},  one can represent $P$ as a bilinear function of $\mathbf{w}$ and $\mathbf{h}$ as follows,
        \begin{equation}\label{P_def}
            P=\sum_{k=1}^K\sum_{i =1}^I\sum_{s\in\mathcal{S}}\phi_{kis}(\mathbf{W},\mathbf{H})=\mathbf{w}^T\mathbf{A}\mathbf{h},
        \end{equation}
        where $\mathbf{A}$ is defined in \eqref{A_def}.
\begin{lemma}\label{p_inequality}
    For $P$ defined in (\ref{P_def}), it holds
    \begin{equation}
        P\geq -\sqrt{K2^{I-1}} (\|\mathbf{W}\|^2+\|\mathbf{H}\|^2).
    \end{equation}
    The equality holds iff the following conditions are met:
    \begin{itemize}
   \item[1)]  $\mathbf{w}=-c\mathbf{Ah}$ for some constant $c>0$; 
    \item[2)]   $\mathbf{h}$ is a linear combination of row vectors of $\mathbf{A}$; 
     \item[3)]   $\mathbf{w}_{i,0}+\mathbf{w}_{i,1}=\mathbf{0},\forall i$; 
      \item[4)]   $\|\mathbf{W}\|=\|\mathbf{H}\|$.
    \end{itemize}
    
\end{lemma}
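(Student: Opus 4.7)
The plan is to lower--bound $P=\mathbf{w}^T\mathbf{A}\mathbf{h}$ through a chain of four classical inequalities: Cauchy--Schwarz, the operator--norm bound supplied by Lemma~\ref{row-orthogonal}, the vector bound $\|\mathbf{w}\|\le\sqrt{2}\,\|\mathbf{W}\|$, and AM--GM applied to $\|\mathbf{W}\|\cdot\|\mathbf{H}\|$. Each of the four equality conditions listed in the lemma corresponds exactly to the saturation of one link in this chain, so the ``iff'' claim comes out of tracking these conditions one at a time.

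Concretely, I would argue
\begin{equation*}
P \;\ge\; -\|\mathbf{w}\|\,\|\mathbf{A}\mathbf{h}\| \;\ge\; -\sqrt{2}\,\|\mathbf{W}\|\cdot\sqrt{K2^I}\,\|\mathbf{H}\| \;\ge\; -\tfrac{1}{2}\sqrt{2K2^I}\,\bigl(\|\mathbf{W}\|^2+\|\mathbf{H}\|^2\bigr),
\end{equation*}
which gives the stated bound since $\tfrac{1}{2}\sqrt{2K2^I}=\sqrt{K2^{I-1}}$. The first step is Cauchy--Schwarz. For the second, Lemma~\ref{row-orthogonal} supplies $\mathbf{A}\mathbf{A}^T=K2^I\mathbf{I}_{dI}$, so every nonzero singular value of $\mathbf{A}$ equals $\sqrt{K2^I}$ and hence $\|\mathbf{A}\mathbf{h}\|^2=\mathbf{h}^T\mathbf{A}^T\mathbf{A}\mathbf{h}\le K2^I\,\|\mathbf{h}\|^2=K2^I\,\|\mathbf{H}\|^2$; at the same time, the entrywise expansion $\|\mathbf{w}_{i,0}-\mathbf{w}_{i,1}\|^2\le 2(\|\mathbf{w}_{i,0}\|^2+\|\mathbf{w}_{i,1}\|^2)$, summed over $i\in[I]$, gives $\|\mathbf{w}\|^2\le 2\|\mathbf{W}\|^2$. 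The third step is AM--GM.

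The main obstacle, such as it is, is the careful matching of equality conditions rather than any analytic difficulty. I would verify: Cauchy--Schwarz saturates with the right sign iff $\mathbf{w}=-c\,\mathbf{A}\mathbf{h}$ for some $c\ge 0$ (condition~1, with $c>0$ in any nontrivial case); the operator--norm bound saturates iff $\mathbf{h}$ lies in the row span of $\mathbf{A}$, equivalently in the range of $\mathbf{A}^T$ (condition~2); the expansion $\|\mathbf{w}_{i,0}-\mathbf{w}_{i,1}\|^2\le 2(\|\mathbf{w}_{i,0}\|^2+\|\mathbf{w}_{i,1}\|^2)$ saturates iff $\mathbf{w}_{i,0}+\mathbf{w}_{i,1}=\mathbf{0}$ for every $i$ (condition~3); and AM--GM saturates iff $\|\mathbf{W}\|=\|\mathbf{H}\|$ (condition~4). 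The converse direction is a routine substitution: assuming all four conditions, every inequality in the chain becomes an identity. The degenerate case $\mathbf{W}=\mathbf{H}=\mathbf{0}$ enters only as a consistent trivial instance, where the equality $P=0$ holds vacuously and any $c>0$ satisfies condition~1.
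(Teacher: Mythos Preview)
Your proposal is correct and follows essentially the same route as the paper's proof: the same chain Cauchy--Schwarz $\to$ operator-norm bound from Lemma~\ref{row-orthogonal} $\to$ the bound $\|\mathbf{w}\|\le\sqrt{2}\,\|\mathbf{W}\|$ $\to$ AM--GM, with the four equality conditions matched to the four links exactly as you describe. The only cosmetic difference is that the paper applies the bounds on $\|\mathbf{A}\mathbf{h}\|$ and on $\|\mathbf{w}\|$ in separate steps rather than together, and does not spell out the degenerate case.
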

\begin{proof} Direct verification leads to
    \begin{equation}
    \begin{aligned}
        P=\mathbf{w}^T\mathbf{A}\mathbf{h}\overset{(a)}{\geq} &-\|\mathbf{w}\|\|\mathbf{A}\mathbf{h}\|\\
         \overset{(b)}{\geq} & -\sqrt{K2^I}\|\mathbf{w}\|\|\mathbf{h}\|\overset{(c)}{=}-\sqrt{K2^I}\|\mathbf{w}\|\|\mathbf{H}\|\\
               \overset{(d)}{\geq}& -\sqrt{2K2^I}\|\mathbf{W}\|\|\mathbf{H}\|\\
                \overset{(e)}{\geq} & -\sqrt{K2^{I-1}} (\|\mathbf{W}\|^2+\|\mathbf{H}\|^2),
    \end{aligned}       
    \end{equation}
    where (a) is because $\|\mathbf{a}\mathbf{b}\|\geq -\|\mathbf{a}\|\|\mathbf{b}\|$, and the equality in (a) holds iff $\mathbf{a}=-c\mathbf{b}$ for some constant $c>0$, namely, condition 1) holds;  (b) is due to $\|\mathbf{A}\mathbf{h}\|\leq\|\mathbf{A}\|_2\|\mathbf{h}\|$ and $\|\mathbf{A}\|_2 =\sqrt{K2^I}$ from Lemma \ref{row-orthogonal}, where $\|\mathbf{A}\|_2$ is the spectral norm of $\mathbf{A}$, and the equality in (b) holds iff condition 2) is met.   (c) is from the definition of $\mathbf{H}$ in \eqref{def-W-H};  (d) is from definitions of $\mathbf{W}$ and $\mathbf{w}$ in \eqref{def-W-H} and \eqref{w_def}, and the equality in (d) holds iff condition 3) is met. The equality in (e) holds iff condition 4) is met.
\end{proof}        

\begin{lemma}\label{nc_eq_lemma}
    If $\phi_{kis}(\mathbf{W},\mathbf{H})= c_1, \forall s \in \mathcal{S}, \forall i$ for a constant $c_1\leq0$, and the equality conditions of Lemma \ref{p_inequality} hold, then the NC conditions in Definition \ref{nc_def} hold.
\end{lemma}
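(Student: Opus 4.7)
The plan is to extract the three NC conditions sequentially from the hypotheses. Condition 3) of Lemma \ref{p_inequality} gives $\mathbf{w}_{i,0}+\mathbf{w}_{i,1}=\mathbf{0}$ outright, so half of NC2 is immediate and the simplification $\mathbf{w}_{i,1}-\mathbf{w}_{i,0}=2\mathbf{w}_{i,1}$ can be used freely in the sequel. What remains is to prove (i) feature collapse NC1, (ii) the orthogonality and equal-norm content of NC2, and (iii) the duality NC3, drawing on conditions 1)--2), the scale-balance condition 4), and the key extra hypothesis that $\phi_{kis}=c_1$ is independent of $(k,i,s)$.

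First I would unpack the block structure of $\mathbf{A}$. Since $\mathbf{A}$ is $K$ horizontal copies of $\mathbf{B}^I$, any vector in its row space has the form $\mathbf{h}=\mathbf{A}^T\mathbf{q}$ for some $\mathbf{q}\in\mathbb{R}^{dI}$, and then ${\rm vec}(\mathbf{H})$ is just $K$ identical copies of $(\mathbf{B}^I)^T\mathbf{q}$. Reading off the $s$-th $d$-block shows $\mathbf{h}_s^{(k)}$ is independent of $k$, which is exactly NC1. For NC3, I combine condition 1) ($\mathbf{w}=-c\mathbf{A}\mathbf{h}$, $c>0$) with $\mathbf{h}=\mathbf{A}^T\mathbf{q}$ and $\mathbf{A}\mathbf{A}^T=K2^I\mathbf{I}_{dI}$ from Lemma \ref{row-orthogonal} to get $\mathbf{w}=-cK2^I\mathbf{q}$, and then invert to obtain
\[\mathbf{h}=c_0\mathbf{A}^T\mathbf{w}=c_0[\mathbf{A}^T,-\mathbf{A}^T]\,{\rm vec}(\mathbf{W}),\]
with $c_0=-1/(cK2^I)$, which is the linear-alignment content of NC3 (modulo the sign convention of $c_0$ in Definition \ref{nc_def}).

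For the remaining part of NC2, I would exploit the simplification of $\phi_{kis}$ under $\mathbf{w}_{i,0}=-\mathbf{w}_{i,1}$: namely $\phi_{kis}=2(1-2s_i)\langle\mathbf{w}_{i,1},\mathbf{h}_s\rangle$, after dropping $k$ via NC1. Setting this equal to $c_1$ gives $\langle\mathbf{w}_{i,1},\mathbf{h}_s\rangle=(1-2s_i)c_1/2$ for every $i$ and every $s\in\mathcal{S}$. Next I would expand $\mathbf{h}_s$ in the $\pm1$ pattern encoded by $(\mathbf{B}^I)^T$: labelling its $d$-blocks $\tilde{\mathbf{q}}_1,\ldots,\tilde{\mathbf{q}}_I$, a short induction on the recursion \eqref{B_def} gives $\mathbf{h}_s=\sum_{j=1}^I(2s_j-1)\tilde{\mathbf{q}}_j$. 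Substituting and treating the signs $(2s_j-1)$ as free $\pm 1$ variables (legal because $s$ ranges over all of $\mathcal{S}=\{0,1\}^I$) lets me match coefficients: the off-diagonal ones give $\langle\mathbf{w}_{i,1},\tilde{\mathbf{q}}_j\rangle=0$ for $j\neq i$, and the diagonal ones give $\langle\mathbf{w}_{i,1},\tilde{\mathbf{q}}_i\rangle=-c_1/2$ for every $i$. Finally, the duality derived above forces each $\tilde{\mathbf{q}}_j$ to be a \emph{common} scalar multiple of $\mathbf{w}_{j,1}$ (through the $j$-th block $-2\mathbf{w}_{j,1}$ of $\mathbf{w}$), so the off-diagonal identities collapse to $\langle\mathbf{w}_{i,1},\mathbf{w}_{j,1}\rangle=0$ and the diagonal identities yield an $i$-independent value for $\|\mathbf{w}_{i,1}\|^2$, completing NC2.

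The main obstacle I anticipate is the bookkeeping: the recursive construction \eqref{B_def} permutes which bit of $s$ controls which $d$-block of $\mathbf{q}$ and of $\mathbf{w}$, so establishing both (a) the $\pm 1$ decomposition $\mathbf{h}_s=\sum_j(2s_j-1)\tilde{\mathbf{q}}_j$ and (b) that the \emph{same} pairing of indices carries over to $\mathbf{w}$ (so $\tilde{\mathbf{q}}_j$ is matched with $\mathbf{w}_{j,1}$ rather than some $\mathbf{w}_{\pi(j),1}$) will require a careful induction on $I$ or an explicit coordinate labelling. Once that is in hand, the sign of $c_0$ and condition 4) (which merely fixes an overall scale between $\mathbf{W}$ and $\mathbf{H}$) are loose ends.
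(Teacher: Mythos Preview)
Your proposal is correct and follows essentially the same route as the paper: NC1 from the $K$-fold block structure of $\mathbf{A}$, NC3 from combining conditions 1)--2) with $\mathbf{A}\mathbf{A}^T=K2^I\mathbf{I}_{dI}$, and NC2 from expanding $\mathbf{h}_s$ as $\sum_j(2s_j-1)(\cdot)_j$ and exploiting that $\phi_{kis}=c_1$ for all $s$. The one organizational difference is that the paper invokes the duality \emph{before} attacking NC2, so that $\mathbf{h}_s$ is written directly as $\frac{1}{cK2^I}\sum_i(2s_i-1)\Delta\mathbf{w}_i$ with $\Delta\mathbf{w}_i=\mathbf{w}_{i,1}-\mathbf{w}_{i,0}$; this eliminates your intermediate variables $\tilde{\mathbf{q}}_j$ and thereby dissolves the index-matching obstacle you flagged, since the pairing between the $j$th block of $\mathbf{h}_s$ and $\mathbf{w}_{j,1}$ is then automatic from the single formula $\mathbf{h}=-\tfrac{1}{cK2^I}\mathbf{A}^T\mathbf{w}$. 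The paper then derives orthogonality by subtracting the $\phi$-identities for two sequences $s,t$ differing in exactly one bit (equivalent to your coefficient-matching over all $s\in\{0,1\}^I$), and equal norms by plugging orthogonality back in. Your observation about the sign of $c_0$ is apt: the paper's own derivation also yields a negative constant in \eqref{h-Aw} and simply cites it as NC3.
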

\begin{proof}

        $\mathbf{h}$ is a linear combination of row vectors of matrix $\mathbf{A}$.
        Together with definition \eqref{w_def} of $\mathbf{h}$  and definition  \eqref{A_def} of $\mathbf{A}$, we can obtain NC1, the hidden features collapse to their corresponding feature means.

        Then we prove NC2. Based on condition 2) from Lemma \ref{p_inequality}, there is $\mathbf{z}$ such that $\mathbf{h}=\mathbf{A}^T\mathbf{z}$, which by condition 1) from Lemma \ref{p_inequality} results in 
        \begin{equation}
            \mathbf{w}=-c\mathbf{Ah} = -c\mathbf{A}\mathbf{A}^T\mathbf{z} = -c K2^I \mathbf{z}.
        \end{equation}
     where the last equation is from \eqref{AAT-K2I}. This suffices to
  \begin{align}\label{h-Aw}\mathbf{h}=\mathbf{A}^T\mathbf{z}=-\frac{1}{cK2^I}\mathbf{A}^T\mathbf{w}. \end{align}
By denoting $\triangle \mathbf{w}_{i}=\mathbf{w}_{i,1}-\mathbf{w}_{i,0}$, based on the definitions of $\mathbf{A}$ and $\mathbf{w}$ in \eqref{A_def} and \eqref{w_def}, the above condition yields
\begin{equation}
            \mathbf{h}_{s}^{(k)}=\frac{1}{cK2^I}\sum_{i=1}^I (2s_i-1)\triangle \mathbf{w}_{i}.
        \end{equation}
        Then the inner product condition becomes, for $\forall s, i$
        \begin{align}\label{wi}
          c_1= &\phi_{kis}(\mathbf{W},\mathbf{H}) \nonumber\\
           =&\left\langle-(2s_i-1)\triangle \mathbf{w}_{i},\mathbf{h}_s^{(k)}\right\rangle\nonumber \\
            =& -\frac{1}{cK2^I}\left\langle (2s_i-1)\triangle \mathbf{w}_{i}, \sum_{j}(2s_j-1)\triangle \mathbf{w}_{j}\right\rangle.
        \end{align}     
 Consider two binary sequences $s$ and $t$, which are same except for the $k$-th position, i.e., $s_i=t_i, \forall i\neq k$ and $s_k+t_k=1$.
        We have the following equalities for $\forall i$, 
        \begin{equation}\label{w1}
             -\frac{1}{cK2^I}\left\langle \left(2s_i-1\right)\triangle \mathbf{w}_{i}, \sum_{j=1}^I \left(2s_j-1\right)\triangle \mathbf{w}_{j}\right\rangle=c_1, 
        \end{equation}
        \begin{equation}\label{w2}
             -\frac{1}{cK2^I}\left\langle\left( 2t_i-1\right)\triangle \mathbf{w}_{i}, \sum_{j=1}^I\left(2t_j-1\right)\triangle \mathbf{w}_{j}\right\rangle=c_1.
        \end{equation}
        For $\forall i \neq k$, subtracting the above two equations leads to 
        \begin{equation} 
             \left(2t_i-1\right)\left\langle \triangle \mathbf{w}_{i}, 2\left(s_k-t_k\right)\triangle \mathbf{w}_{k}\right\rangle=0.
        \end{equation}
  indicating that
        \begin{equation}
            \left\langle \triangle \mathbf{w}_i, \triangle \mathbf{w}_k\right\rangle=0, \forall i \neq k.
        \end{equation}
        Together with condition 3) of Lemma \ref{p_inequality}, we can conclude that $\{\mathbf{w}_{i,0}\}$ and $\{\mathbf{w}_{i,1}\}$ are two sets of orthogonal vectors.
        Furthermore, by applying the orthogonality to \eqref{wi}, we have 
        \begin{equation}
            -\frac{1}{cK2^I} (2s_i-1)^2\left\langle \triangle \mathbf{w}_{i},\triangle \mathbf{w}_{i}\right\rangle=c_1,
        \end{equation}
        which implies that 
        \begin{equation}
            \|\triangle \mathbf{w}_i\|^2=-cc_1K2^I.
        \end{equation}
        This completes the proof of NC2. Finally,  NC3 is obtained by \eqref{h-Aw}.
        This completes the proof.
\end{proof}

\subsection{Proof of Theorem \ref{NC_theorem}}
    As function $f(x)=\ln(1+\exp(x))$ is  convex, we can apply Jensen's inequality to the following function,
        \begin{equation}
            \begin{aligned}
            l(\mathbf{W},\mathbf{H}) &=\frac{1}{KI2^I}\sum_{k=1}^K\sum_{i=1}^I \sum_{s\in\mathcal{S}}\ln\Big(1+\exp\Big(\phi_{kis}(\mathbf{W},\mathbf{H})\Big)\Big)\\
            &\geq \ln\left(1+\exp\left(\frac{1}{KI2^I}\sum_{k=1}^K\sum_{i=1}^I \sum_{s\in\mathcal{S}}\phi_{kis}(\mathbf{W},\mathbf{H})\right)\right)\\
            &= \ln\left(1+\exp\left(\frac{P}{KI2^I}\right)\right),
            \end{aligned}
        \end{equation}
       where the equality in `$\geq$' holds iff $\phi_{kis}(\mathbf{W},\mathbf{H})= c_1, \forall s \in \mathcal{S}, \forall i$. Let $\rho = \|\mathbf{W}\|^2+\|\mathbf{H}\|^2$. From Lemma \ref{p_inequality}, it holds
        \begin{equation}\label{P_LB}
            \begin{aligned}
                P \geq -\sqrt{K2^{I-1}} \rho.
            \end{aligned}            
        \end{equation}
Using the above facts, we have
        \begin{equation}
        \begin{aligned}
            L(\mathbf{W},\mathbf{H}) &=l(\mathbf{W},\mathbf{H}) + \lambda \rho\\
           &  \geq  \ln\left(1+\exp\left(\frac{P}{KI2^I}\right)\right) +\lambda \rho\\
           & \geq \ln\left(1+\exp\left(-t\rho\right)\right)+\lambda \rho = L(\rho),
        \end{aligned}            
        \end{equation}
where ${t=1/(I\sqrt{2K2^I}})$. It is clear that $L(\rho)$ is a convex function of $\rho$ and problem $\min_\rho  L(\rho)$ has zero solution if ${\lambda \geq t/2}$. In other words, problem $\min_\rho  L(\rho)$ admits
a non-trivial global minimizer if ${0<\lambda <t/2}$. Using the first-order optimality condition, $L'(\rho_{opt})=0$, one can find optimal solution $\rho_{opt}$ by
        \begin{equation}
            \rho_{opt}=\frac{1}{t}\ln\frac{t-\lambda}{\lambda}.
        \end{equation}
        Then we have the following inequality,
        \begin{equation}
            L(\mathbf{H},\mathbf{W})\geq L(\rho_{opt}),
        \end{equation}
        which indicates that problem \eqref{multi-binary_classification} also admits non-trivial global minimizers. Recall the sufficient and necessary conditions for the global minimizers, i.e., $\phi_{kis}(\mathbf{W},\mathbf{H})= c_1, \forall s \in \mathcal{S}, \forall i$ and conditions in Lemma \ref{p_inequality}. Based on Lemma \ref{nc_eq_lemma}, we obtain Theorem \ref{NC_theorem}. \qed
           
\end{appendices}

		\bibliographystyle{IEEEtran}
		\bibliography{bib}

	\end{document}